\newcommand{\SINR}{\mathrm{SINR}}
\newcommand{\rmin}{r_{\mathrm{min}}}
\newcommand{\rmax}{r_{\mathrm{max}}}
\newcommand{\rEarth}{r_{\oplus}}
\newcommand{\phiuser}{\phi_{\mathrm{u}}}
\newcommand{\phisat}{\phi_{\mathrm{s}}}
\newcommand{\Phisat}{\Phi_{\mathrm{s}}}
\newcommand{\lamsat}{\lambda_{\mathrm{s}}}
\newcommand{\Insp}{I_{}}
\newcommand{\thmin}{\theta_{\mathrm{min}}}
\newcommand{\Prob}{{\mathbb{P}}}
\newcommand{\Pc}{P_\mathrm{c}}
\newcommand{\ps}{p_\mathrm{0}}
\newcommand{\Gs}{G_\mathrm{0}}
\newcommand{\Rs}{R_\mathrm{0}}
\newcommand{\rs}{r_\mathrm{0}}
\newcommand{\ths}{\theta_\mathrm{s}}
\newcommand{\Xs}{\mathcal{X}_\mathrm{0}}
\newcommand{\xs}{\mathit{x}_\mathrm{0}}
\newcommand{\Xn}{\mathcal{X}_n}
\newcommand{\mus}{\mu_\mathrm{s}}
\newcommand{\devs}{\sigma_\mathrm{s}}
\newcommand{\ppi}{p_n}
\newcommand{\mn}{m_n}
\newcommand{\phiu}{\phi_\mathrm{u}}
\newcommand{\NI}{N_\mathrm{I}}
\newcommand{\Nvis}{N_\mathrm{vis}}
\newtheorem{theorem}{Theorem}
\newtheorem{lem}{Lemma}
\newtheorem{cor}{Corollary}
\definecolor{orcidlogocol}{HTML}{A6CE39}
\tikzset{
  orcidlogo/.pic={
    \fill[orcidlogocol] svg{M256,128c0,70.7-57.3,128-128,128C57.3,256,0,198.7,0,128C0,57.3,57.3,0,128,0C198.7,0,256,57.3,256,128z};
    \fill[white] svg{M86.3,186.2H70.9V79.1h15.4v48.4V186.2z}
                 svg{M108.9,79.1h41.6c39.6,0,57,28.3,57,53.6c0,27.5-21.5,53.6-56.8,53.6h-41.8V79.1z M124.3,172.4h24.5c34.9,0,42.9-26.5,42.9-39.7c0-21.5-13.7-39.7-43.7-39.7h-23.7V172.4z}
                 svg{M88.7,56.8c0,5.5-4.5,10.1-10.1,10.1c-5.6,0-10.1-4.6-10.1-10.1c0-5.6,4.5-10.1,10.1-10.1C84.2,46.7,88.7,51.3,88.7,56.8z};
  }
}
\newcommand\orcidicon[1]{\href{https://orcid.org/#1}{\mbox{\scalerel*{
\begin{tikzpicture}[yscale=-1,transform shape]
\pic{orcidlogo};
\end{tikzpicture}
}{|}}}}
\title{Nonhomogeneous Stochastic Geometry Analysis of Massive LEO Communication Constellations}
\author{\IEEEauthorblockN{Niloofar Okati \orcidicon{0000-0002-8074-5146} and Taneli Riihonen \orcidicon{0000-0001-5416-5263},~\IEEEmembership{Member,~IEEE} }%
\thanks{The work of N.~Okati and T.~Riihonen was supported by a Nokia University Donation.}%
\thanks{N.~Okati and T.~Riihonen are with Unit of Electrical Engineering, Faculty of Information Technology and Communication Sciences, Tampere University, FI-33720 Tampere, Finland (e-mail: niloofar.okati@tuni.fi; taneli.riihonen@tuni.fi).}
\thanks{}%
}
\begin{document}
\setlength{\parskip}{0pt}

\maketitle

\begin{abstract}
Providing truly ubiquitous connectivity 
requires development of low Earth orbit (LEO) satellite Internet, whose theoretical study is lagging behind network-specific simulations.
In this paper, we derive analytical expressions for the downlink coverage probability and average data rate of a massive inclined LEO constellation in terms of {total interference power's Laplace transform in the} presence of fading and shadowing, ergo presenting a stochastic geometry-based analysis. We assume the desired link to experience Nakagami-$m$ fading, which serves to represent different fading scenarios by varying integer $m$, while the interfering channels can follow any fading model without an
effect on analytical tractability. To take into account the inherent {non-uniform distribution of satellites across different latitudes, we} model the LEO network as a nonhomogeneous Poisson point process with its intensity being a function of satellites’ actual distribution in terms of {constellation size, the altitude of the constellation, }and the inclination of orbital planes. 
From the numerical results, we observe optimum points for both the constellation altitude and the number of orthogonal frequency channels; interestingly, the optimum user's latitude is greater than the inclination angle due to the presence of fewer interfering satellites.
Overall, the presented study facilitates general stochastic evaluation and {planning of satellite Internet constellations} without specific orbital simulations or tracking data on satellites' exact positions in space.
\end{abstract}

\begin{IEEEkeywords}
Massive communication satellite networks, Low Earth orbit (LEO) Internet constellations,  interference, coverage probability, average achievable data rate, stochastic geometry, Poisson point process.
\end{IEEEkeywords}

\IEEEpeerreviewmaketitle
\section{Introduction}
\label{sec:intro}

Recent advances towards 6th generation (6G) wireless networks require progression and development {of non-terrestrial networks to provide seamless connections with high transmission capacity}\cite{GiordaniNon-Terrestrial,JiaJointHAP,NTN6G3,NTN6G4}. Among non-terrestrial networks, low Earth orbit (LEO) satellite Internet constellations have gained increasing popularity as they provide global connectivity for unserved or underserved regions, where the deployment of terrestrial networks is not feasible or economically reasonable\cite{46,47}. Deploying thousands of satellites will ensure that every single person or appliance on Earth could be connected and no location is left in outage.   

While the performance of many LEO constellations (e.g., Starlink, OneWeb, Kuiper) has been evaluated through network-specific simulations to put the commercial plans forward, a general scientific understanding of their performance is limited in the open literature. Conventional simulation-based studies are restricted to few number of satellites with deterministic locations which is not capable of evaluating the general performance of a massive satellite network consisting of thousands of satellites. Moreover, in most of the literature, the coverage regions are assumed to have fixed circular footprints, while selecting smaller inclination angles and {simultaneous consolidated operation of several LEO networks render a not-so-regular Voronoi tessellation.}

In this paper, downlink coverage probability and average data rate of inclined LEO constellations are analyzed under general shadowing and fading propagation models. The satellites' positions are assumed to be distributed as a nonhomogeneous Poisson point process (NPPP), which models {the satellites distribution across varying latitudes} precisely by setting the intensity function to be the actual density of satellites in an actual constellation. 

\subsection{Related Works}

The literature around LEO networks is mostly limited to deterministic and simulation-based analyses. In \cite{1}, {the performance of two different LEO constellations was simulated assuming specific constellation sizes.}
{The probability of average call drop and the distribution of the number of handoffs were studied for the Iridium constellation} in \cite{4}. 
{A deterministic model to characterize the visibility time of one LEO satellite was presented} in \cite{8}. { Since the model in} \cite{8} {is not valid for any arbitrarily located user, authors in} \cite{49} {contributed statistical analysis of coverage time in a mobile LEO constellation}.
In \cite{48}, a LEO-based Internet-of-Things architecture was presented so as to supply network access for devices distributed in remote areas.

Stochastic geometry is an area of mathematics, which deals with the study of random objects on Euclidean space. In the area of telecommunication, stochastic geometry has been extensively utilized {to model, evaluate, and develop the wireless communication networks with irrigular topologies} \cite{24,blaszczyszyn2018stochastic,25}, especially for two-dimensional (planar) terrestrial networks  \cite{23,24,25,9,10,11,12,28,blaszczyszyn2018stochastic}. {Various studies in stochastic geometry modeling of multi-tier and cognitive networks were reviewed} in \cite{23}. {Observations in} \cite{9} {have shown that the Poisson point process (PPP) and a regular grid model provide lower and upper bounds on the network performance metrics, respectively, with the same deviation from the actual network performance.}
The research in \cite{10, 11}{, being an extension to} \cite{9}, {modeled a multi-tier network considering the limitations for the achievable quality-of-service}. {The coverage of uplink was studied in} \cite{12} {assuming base stations and devices are distributed as independent PPPs.}

The application of stochastic geometry to three-dimensional wireless networks has gained {remarkable} attention in the literature \cite{19,20}.
In \cite{19}, a PPP model was applied to model and analyze the coverage in three-dimensional cellular networks.
{Since the PPP, despite providing tractable analysis, {is not accurate when applied to networks with limited nodes in a finite area} \cite{41}, a binomial point process (BPP) can be utilized instead to capture the characteristics of such networks \cite{42,43}. {A finite network of unmanned aerial vehicles was modeled as a BPP in} \cite{20,45}. {In} \cite{42,44},  {a planar network with an arbitrary shape was studied assuming the transmitter is positioned at a fixed distance. The results were then generalized in} \cite{28}, {by using two protocols for selecting the transmitter}.}

On the literature around the LEO satellite networks, the analysis is limited to few number of satellites with known locations and/or coverage spots. In \cite{Afshang}, {with tools from stochastic geometry, authors have developed a method to characterize the magnitude of Doppler shifts in a LEO network. Resource control of a satellite--terrestrial network was investigated in} \cite{51}, {in order to minimize the outage probability and maximize the data rate.
Focusing on only a single spotbeam, the hybrid satellite--terrestrial network supporting 5G infrastructure has been presented in}~\cite{50,52}.
In \cite{dwivedi2020performance}, the outage probability of a satellite-based Internet-of-Things, in which LEO satellites relay uplink data to ground stations, is derived in closed form by assuming a low number of satellites at deterministic locations.

Recently, more research on LEO networks using stochastic geometry has started emerging. In our study\cite{okati2020downlink}, {generic performance of satellite networking has been formulated assuming uniform distribution for satellites, and without considering any explicit model of orbits and constellation geometry. However, in practical constellations, the satellites are not evenly distributed across different latitudes} \cite{okatiPIMRC}, i.e., as the user gets farther from the equator towards the poles, more satellites are visible to it. {Therefore, the constellation densities are typically not uniform in practice which results in an inconsistency between the performance of actual and uniformly distributed constellations.} In \cite{okati2020downlink}, we compensated for this mismatch by numerically computing a parameter named as effective number of satellites, while in \cite{okatiPIMRC}, we derived a mathematical expression for it based on constellation geometry.

The work in \cite{AlouiniNearestNeighbor} characterizes the distance distribution in two different communication links in a LEO satellite network: link between a user on Earth and the nearest satellite to it and the link between a satellite and its nearest neighboring satellite. Unlike in \cite{okati2020downlink}, the satellites are assumed to be placed at different altitudes, i.e., on multiple orbital shells. Stochastic geometry and the results from \cite{AlouiniNearestNeighbor} were then utilized in \cite{talgat2020stochastic} to {obtain the downlink probability of coverage for a LEO network}, where satellite gateways act as relays between the satellites and users on Earth.  An uplink communication scenario was characterized by considering interfering terrestrial transmitters in \cite{YastrebovaGlobcom}. In \cite{okatiVTC2021}, the coverage and rate of a noise-limited {\em interference-free} LEO network were analyzed assuming the satellites' positions are distributed as a nonhomogeneous Poisson point process (NPPP), which models the actual distribution of satellites along different latitudes more precisely by selecting proper intensity for it.

\subsection{Contributions and Paper Organization}

We model the satellites' positions in a LEO network as a nonhomogeneous PPP which facilitates not only using the tools from stochastic geometry, but also capturing the exact characteristics of the actual constellations, i.e., the uneven distribution of satellites across different latitudes. Unlike in\cite{okati2020downlink,okatiPIMRC,talgat2020stochastic}, by selecting the intensity of NPPP to fit the actual distribution of satellites on an orbital shell, there is no mismatch between the performance of theoretical stochastic constellations and actual deterministic LEO networks.
We derive the intensity of NPPP in closed form in terms of the constellation parameters: the total number of satellites, altitude of the constellation, latitude of the satellites, and the inclination of the orbits.

{As the main contributions, we utilize stochastic geometry to formulate the coverage and average achievable rate} of a user served by a LEO constellation in terms of the derivative of Laplace transform of interference power.\footnote{Thus, the present study, unlike the preliminary results presented in \cite{okatiVTC2021} that are limited to the special case of scheduling an orthogonal channel for every satellite, includes the cumulative interference from all other satellites that are visible to the user and share the same channel.} Our derivations do not rely on exact location of every single satellite and are applicable for performance analysis of any given constellation as long as the constellation parameters are known. Modeling the satellites' locality as a NPPP, {the analytical expressions obtained from a stochastic constellation geometry can be particularly used to analyze the actual deterministic constellations.} 

In propagation modelling, unlike most of literature on this topic, we take into account the effect of shadowing caused by the presence of the obstacles surrounding the user. 
To retain analytical tractability and still cover different fading scenarios, we assume Nakagami-$m$ fading with integer $m$ as well as shadowing with any desired distribution for the propagation model of desired links.\footnote{Varying the value of $m$, we are able to control the multi-path fading severity. For instance, $m=1$ corresponds to Rayleigh fading environment while $m\to\infty$ represents non-fading channels.} {For interfering signals, any arbitrarily distributed fading and shadowing can be considered, since the analytical tractability is unaffected.}

Finally, we evaluate two critical performance metrics, i.e., coverage and data rate, in terms of several key design parameters, such as altitude of the constellation and the number of frequency channels. From the numerical results, we are able to observe optimal points for these parameters for some specific network setup. Counter-intuitively, the user which resides in higher latitudes, away from the constellation borders, has the best performance due to existence of fewer interfering satellites in that region.

{The remainder of this paper is organized as follows.} Section~\ref{sec:sysmod} describes the system model and the mathematical preliminaries for modeling a LEO network as a NPPP. {The main outcome of this study, which is the derivation of analytical expressions for downlink coverage probability and average achievable rate of a terrestrial user, is presented in} Section~\ref{sec:cov}, which involves also the analysis of the Laplace transform of interference power. We provide numerical results in Section~\ref{sec:Numerical Results} for the verification of our derivations and studying the effect of key system parameters such as the size of the constellation and its altitude as well as the channel parameters on the network performance. Finally, we conclude the paper in Section~\ref{sec:Conclusion}.

\section{System Model}
\label{sec:sysmod}
In this section, first, we present the characteristics and geometries of actual low Earth orbit satellite constellations. Next, we will introduce the mathematical preliminaries for modeling the actual network as a stochastic point process.

\subsection{Actual Inclined Constellations}

{As shown in }Fig.~\ref{fig:system_model}, {we consider a LEO communication satellite constellation consisting of $N$ satellites launched uniformly on circular orbits with inclination angle, $\iota$, and altitude that is denoted by $\rmin$ --- the subscript indicates the minimum possible distance between a satellite and a ground user (as measured at the zenith).} Satellites' spherical coordinates in terms of their latitude and longitude are denoted by $(\phisat,\lamsat)$.

A user terminal is located on any specific latitude, denoted by $\phiu$, on the surface of Earth that is approximated as a perfect sphere with radius $\rEarth \approx 6371$~km. {Satellites rising above the horizon at an angle of $\theta_\mathrm{s}\geq \thmin$ are the only ones capable of transmitting signals to the users. As such, $\rmax$ refers to the maximum distance at which a satellite and a user are able to communicate (and it occurs when $\theta_\mathrm{s}=\thmin$), and}
\begin{align}
\frac{\rmax}{\rEarth} = \sqrt{\frac{\rmin}{\rEarth}\left(\frac{\rmin}{\rEarth}+2\right) + \sin^2(\thmin)}-\sin(\thmin).
\end{align}

\begin{figure}[t]
    \includegraphics[trim =8cm 6cm 5cm 5.5cm, clip,width=0.5\textwidth]{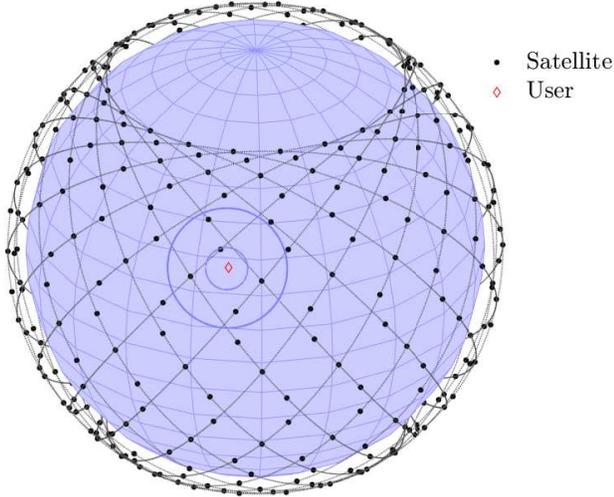}
    \caption{A constellation in an example case of $N = 400$ satellites flying on $\iota = 53^\circ$ inclined orbits. The borders of two spherical caps above the user are shown: the outer one covers all visible satellites to the user while the inner one is empty of satellites and the serving satellite is located on its border.}
    \label{fig:system_model}
\end{figure}

In this paper, the serving satellite is the one with the shortest distance to the user. {Assuming that $K$, with $K\leq N$, orthogonal frequency channels are available for the network, we will distribute $N/K$ satellites randomly among the channels, which potentially causes co-channel interference to the user. All satellites on the same channel that are elevated above the horizon to an angle of $\theta_\mathrm{s}\geq \thmin$ cause interference to reception of the user}.

{The variables $R_0$ and $R_n$, \mbox{$n=1,2,\ldots,N$}, represent the distances from the user to the serving satellite and the other interfering satellites, respectively, while $G_0$ and $G_n$ denote the corresponding channel gains to model fading.}
Shadowing effect is modeled by random variables $\Xn$,  \mbox{$n=1,2,\ldots,N$}, correspondingly. Obviously, $G_n=\Xn=0$ if $R_n > \rmax$ for some $n=1,\ldots,N$. {To simplify notation}, {when $\NI>0$, we let indices $n=1,2,\ldots,\NI$ correspond to those $\NI \leq N/K-1$ satellites with  $\theta_\mathrm{s}\geq \thmin$ that cause co-channel interference.} {To take into account the effect of directed transmission, we assign serving and interfering satellites to different power levels that are denoted by $\ps$ and $\ppi$, \mbox{$n=1,2,\ldots,N$}, respectively, such that $\ppi\leq\ps$.} Table~\ref{table:math notations} {summarizes the notation used in this paper.}

{According to the described model,} {the signal-to-interference-plus-noise ratio (SINR)} {of the link is given by}
\begin{align}
\label{eq:SINR}
\SINR = \frac{\ps \Gs \Xs \Rs ^{-\alpha}}{I+\sigma^2},
\end{align}
{where the constant $\sigma^2$ is the additive noise power}, {the parameter $\alpha$ is a path loss exponent,} and
\begin{align}
    \label{I_NSP}
   I \triangleq \sum_{n=1}^{\NI} \ppi G_n \Xn R_n^{-\alpha}
\end{align}
{is the cumulative interference power from all $\NI$ other satellites above the user's horizon} {that share the same frequency channel with the serving satellite.} The distance from the user to its nearest satellite is 
\begin{align}
    \label{NSP condition}
   \Rs=\min_{n=1,2,\ldots,\Nvis} R_n,
\end{align}
where $\Nvis$ is a variable representing the number of visible satellites to the user (cf.\ the outer cap in Fig.~\ref{fig:system_model}).

\begin{table*}[!t]

\caption{Summary of mathematical notation}
\label{table:math notations}
\begin{center}
\begin{tabular}{c|cc}
\hline
\hline
\textbf{Notation}&\textbf{Description}\\
\hline
$\rEarth; \rmin; \rmax $&Earth radius ($6371$ km); {Constellation altitude}; Maximum possible distance between a user and a visible satellite\\
\hline
$\Rs$; $R_n$ &Serving distance; Distance to the $n^{\mathrm{th}}$ satellite\\
\hline
$N; \NI; K$& Constellation size; The number of interfering satellites; The number of frequency channels\\

\hline
$\Gs; G_n$&Channel fading gain of the serving link; Channel fading gain of the $n^{\mathrm{th}}$ link\\
\hline

$\Xs; \Xn$& Shadowing component of the serving link; Shadowing component of the $n^{\mathrm{th}}$ link\\
\hline
 $\alpha$&Path loss exponent\\
 \hline
$\ps; \ppi $& Transmission power of the serving satellite; Transmission power of the interfering satellites\\
\hline
 $\sigma^2$&Additive noise power\\
\hline
 $T$&SINR threshold for coverage probability\\
\hline
 $\Pc; \bar{C}$&Coverage probability; Average achievable rate\\
\hline
\hline
\end{tabular}
\end{center}
\end{table*}

\subsection{Nonhomogeneous PPP Model}
\label{sec:performance}

In the constellation described earlier, the satellites {appear unevenly along the lines of latitudes}, which means, e.g., that there are more visible satellites for a user located close to inclination limits than for one on equatorial region. In order to model the latitude-dependent distribution of satellites, we assume that the satellites are distributed according to a {\em nonhomogeneous} PPP, $\xi$, on a {spherical surface} with radius $\rEarth+\rmin$. The NPPP is characterized with its intensity, $\delta(\phisat,\lamsat)$, which varies according to the satellites' latitude (and/or longitude).

By the definition of a NPPP, the number of points in some bounded region ${\cal A}$ of the orbital shell is a Poisson-distributed random variable denoted by $\mathcal{N}$. 
Thereby, the probability to have $n$ satellites in ${\cal A}$ is given by
\begin{align}
\label{void prob}
&P_n\left({\cal A}\right)\triangleq\Prob\left(\mathcal{N}=n\right)\\\nonumber
&= \frac{1}{n!}\left(\iint_{\cal A}\delta(\phisat,\lamsat) \,(\rmin+\rEarth)^2 \cos(\phisat)\, d\phisat d\lamsat\right)^n\\\nonumber
&\times\exp\left(-\iint_{\cal A}\delta(\phisat,\lamsat) \,(\rmin+\rEarth)^2 \cos(\phisat)\, d\phisat d\lamsat\right),
\end{align}
where $\delta(\phisat,\lamsat)$ is the intensity function of nonhomogeneous PPP at latitude $\phisat$ and longitude $\lamsat$. Based on the given system model, ${\cal A}$ is the spherical cap where {viewable satellites to the user exist} (cf.\ the outer one in Fig.~\ref{fig:system_model}), with surface area $\left({\delta\pi \left(\rmax^2-\rs^2\right)}\right)/({1-\frac{\rmin}{\rEarth+\rmin}})$ (See \cite[Appendix~A]{okati2020downlink}).

In order to precisely model a LEO network as a NPPP, we first need to characterize the intensity function based on the actual physical network as follows. The preliminaries obtained herein will be used shortly towards contributing expressions for probability of coverage and average achievable rate.

\begin{lem}
\label{lem:Intensity ppp}
When satellites are distributed uniformly on low circular orbits with the altitude, $\rmin$, and the inclination angle, $\iota$, the intensity function of the nonhomogeneous PPP is a function of latitude, $\phisat$, only and given by
\begin{equation}
    \label{eq:lambda}
   \delta(\phisat)=\frac{N}{\sqrt{2}\pi^2(\rmin+\rEarth)^2}\cdot\frac{1}{ \sqrt{\cos(2\phisat)-\cos(2\iota)}},
\end{equation}
and we can denote $\delta(\phisat,\lamsat) = \delta(\phisat)$ since it does not depend on $\lamsat$, for $\phisat\in[-\iota,\iota]$.
\end{lem}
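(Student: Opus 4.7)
The plan is to derive $\delta(\phisat)$ from first principles by (i) finding the latitude distribution of a single satellite on an inclined circular orbit, (ii) invoking uniformity over longitude to drop the $\lamsat$-dependence, and (iii) converting a per-latitude density into a per-area intensity through the spherical surface element.

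First I would fix one orbit of inclination $\iota$ and parameterize the position of a satellite along it by its argument-of-latitude $u\in[0,2\pi)$, which is uniformly distributed because the orbit is circular and the satellite moves at constant angular speed. Standard spherical trigonometry for an inclined great circle gives $\sin(\phisat)=\sin(\iota)\sin(u)$, so the latitude takes values in $[-\iota,\iota]$ and is visited twice per revolution (once ascending, once descending). Differentiating yields $\cos(\phisat)\,d\phisat=\sin(\iota)\cos(u)\,du$, and inverting $\sin(u)=\sin(\phisat)/\sin(\iota)$ gives $|\cos(u)|=\sqrt{\sin^2(\iota)-\sin^2(\phisat)}/\sin(\iota)$. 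Counting the two crossings of the target latitude per orbit, the marginal latitude density of one satellite is
\begin{equation*}
f(\phisat)=\frac{2}{2\pi}\left|\frac{du}{d\phisat}\right|=\frac{1}{\pi}\cdot\frac{\cos(\phisat)}{\sqrt{\sin^2(\iota)-\sin^2(\phisat)}}.
\end{equation*}
Applying the half-angle identity $\sin^2(\iota)-\sin^2(\phisat)=\tfrac12(\cos(2\phisat)-\cos(2\iota))$ rewrites this as $f(\phisat)=\sqrt{2}\cos(\phisat)/(\pi\sqrt{\cos(2\phisat)-\cos(2\iota)})$.

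Next, I would argue that because the $N$ orbits are launched uniformly (i.e., their right ascensions of ascending node are distributed uniformly on $[0,2\pi)$, independently of the uniform argument-of-latitude), the resulting longitude at any fixed latitude is uniformly distributed on $[0,2\pi)$, so the joint density of a satellite's position is $f(\phisat)/(2\pi)$ as a density in $(\phisat,\lamsat)$. Hence the expected number of satellites in an infinitesimal patch at $(\phisat,\lamsat)$ equals $N\,f(\phisat)/(2\pi)\,d\phisat\,d\lamsat$.

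Finally, converting to a surface intensity requires dividing by the area element of the orbital sphere, namely $(\rmin+\rEarth)^2\cos(\phisat)\,d\phisat\,d\lamsat$. The $\cos(\phisat)$ factors cancel, leaving
\begin{equation*}
\delta(\phisat)=\frac{N\,f(\phisat)}{2\pi(\rmin+\rEarth)^2\cos(\phisat)}=\frac{N}{\sqrt{2}\,\pi^2(\rmin+\rEarth)^2\sqrt{\cos(2\phisat)-\cos(2\iota)}},
\end{equation*}
which is the claimed expression and is manifestly independent of $\lamsat$. The step I expect to require the most care is the change of variables, specifically correctly accounting for the factor-of-two from the two latitude crossings per orbit and picking the right branch of $\cos(u)$ to obtain a nonnegative density; the remaining steps are trigonometric simplification and an area-element conversion. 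A brief verification that $\int_{-\iota}^{\iota}f(\phisat)\,d\phisat=1$ (via the substitution $\sin(\phisat)=\sin(\iota)\sin(u)$) would provide a useful sanity check.
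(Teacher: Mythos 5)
Your proof is correct and follows essentially the same route as the paper: the intensity is obtained as the ratio of the expected number of satellites in an infinitesimal latitude band, $N f_{\Phisat}(\phisat)\,d\phisat$, to that band's area $2\pi(\rmin+\rEarth)^2\cos(\phisat)\,d\phisat$. The only difference is that you derive the latitude density $f_{\Phisat}(\phisat)=\frac{1}{\pi}\cos(\phisat)/\sqrt{\sin^2(\iota)-\sin^2(\phisat)}$ from first principles (uniform argument of latitude, two crossings per revolution, change of variables), whereas the paper simply imports it from \cite[Lemma~2]{okatiPIMRC}; your version is therefore more self-contained, and both the factor-of-two counting and the half-angle simplification check out.
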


\begin{proof}
The intensity function is equivalent to the actual density of the satellites on an orbital shell element created by spanning the azimuthal angle from $0^{\circ}$ to $360^{\circ}$ on the orbital shell at latitude $\phisat$,
{that is calculated as}
\begin{align}
\label{eq:delta2}
\delta(\phisat) = \frac{N f_{\Phisat}(\phisat)\,d\phisat}{2\pi(\rmin+\rEarth)^2\cos(\phisat)\,d\phisat},
\end{align}
{which is the ratio of the number of satellites resided on the surface element to the element's area.} Substituting the probability density  $f_{\Phisat}(\phisat)$ of random latitude $\Phisat$ \cite[Lemma~2]{okatiPIMRC} completes the proof.
\end{proof}

If the intensity of satellites is simplistically presumed to be uniform all over the orbital shell, it can be written as follows.

\begin{lem}
\label{lem:constant Intensity}
When satellites are uniformly distributed on a sphere, the point process turns into a homogeneous Poisson point process with a constant intensity given by 
\begin{equation}
    \label{eq: constant lambda}
   \delta=\frac{N}{4\pi (\rmin+\rEarth)^2},
\end{equation}
which does not depend on latitudinal/longitudinal parameters. 
\end{lem}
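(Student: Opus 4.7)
The plan is to observe that this lemma is essentially a special case of Lemma~\ref{lem:Intensity ppp} (or, equivalently, a direct application of the definition of intensity for a homogeneous PPP on a sphere), so no substantive new machinery is required. I would approach it in two equivalent ways and present whichever is cleaner.

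The first approach is to reuse the bookkeeping formula \eqref{eq:delta2} from the proof of Lemma~\ref{lem:Intensity ppp}. Assuming uniform distribution on the sphere of radius $\rmin+\rEarth$, the latitude density of a satellite must be proportional to the Jacobian factor on the sphere, i.e.\ $f_{\Phisat}(\phisat)=\tfrac{1}{2}\cos(\phisat)$ for $\phisat\in[-\pi/2,\pi/2]$, since the longitude is uniform on $[0,2\pi)$ independently. Substituting this into \eqref{eq:delta2} gives
\begin{align*}
\delta(\phisat)=\frac{N\cdot\tfrac{1}{2}\cos(\phisat)}{2\pi(\rmin+\rEarth)^2\cos(\phisat)}=\frac{N}{4\pi(\rmin+\rEarth)^2},
\end{align*}
which is independent of both $\phisat$ and $\lamsat$, as claimed.

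The second, more primitive, approach is to invoke the definition directly: a homogeneous PPP on a bounded surface with constant intensity $\delta$ has expected total count $\delta\cdot|{\cal A}|$. Taking ${\cal A}$ to be the entire orbital shell, whose surface area is $4\pi(\rmin+\rEarth)^2$, and matching this expected count to the constellation size $N$ yields the same expression. Either argument makes it manifest that the $\cos(\phisat)$ weighting of the area element cancels against the $\cos(\phisat)$ weighting of the latitude density, removing any latitudinal dependence. There is no genuine obstacle here — the only thing to be careful about is not confusing "satellites uniform on the sphere" (this lemma) with "satellites uniform along inclined orbital tracks" (Lemma~\ref{lem:Intensity ppp}), since the two give quite different intensity functions, the former being a degenerate isotropic baseline used for comparison.
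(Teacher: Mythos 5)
Your proof is correct; the paper in fact states this lemma without any proof, treating it as immediate, and both of your arguments (the cancellation of the $\cos(\phisat)$ Jacobian against the latitude density $\tfrac{1}{2}\cos(\phisat)$ in \eqref{eq:delta2}, and the direct count-over-area computation $N/\bigl(4\pi(\rmin+\rEarth)^2\bigr)$) are valid and yield the stated intensity. Your closing caution about not conflating the isotropic baseline with the inclined-orbit case of Lemma~\ref{lem:Intensity ppp} is apt and consistent with how the paper uses the two lemmas.
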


Thus, for the special case when satellites are distributed uniformly on the orbital shell, by substitution from Lemma~\ref{lem:constant Intensity}, the probability given in \eqref{void prob} can be expressed in closed form as
\begin{align}
    \label{void_cnst}
    P_n\left({\cal A}\right)
   = \frac{1}{n!}\hspace{-1 mm}\left(\frac{N \left(\rmax^2-\rmin^2\right)}{4\rEarth\left(\rEarth+\rmin\right)}\right)^n  \hspace{-2 mm}\exp \left(-\frac{N \left(\rmax^2-\rmin^2\right)}{4\rEarth\left(\rEarth+\rmin\right)}\right).
\end{align}


\section{Performance Analysis}
\label{sec:cov}

In this section, we focus on the performance analysis of a LEO satellite network in terms of coverage probability and data rate of a user in an arbitrary location on Earth. We utilize stochastic geometry in order to formulate coverage probability and rate as a function of the network and the propagation parameters. Two main components of our analytical derivations are the distribution of the distance to the nearest satellite and the Laplace function of interference which will be presented throughout this section.

\subsection{The Distance to The Nearest Satellite}
In this paper, {the serving satellite is assumed to be the nearest one to the user.}
Therefore, an important parameter for coverage and rate analysis is the probability density function (PDF) of the distance to the nearest satellite, $\Rs$, which is given as follows.

\begin{lem}
\label{Lem:pdf_r0}
The PDF of the serving distance $\Rs$, when the satellites are distributed according to a nonhomogeneous PPP with a latitude-dependent intensity, $\delta(\phisat)$, is 
\begin{align}
\label{eq:pdf_r0}
\nonumber
&f_{\Rs}\left(\rs\right)\\\nonumber
 &= 2\rs\left(\frac{\rmin}{\rEarth}+1\right)\exp(-\gamma(\rs))\int_{\max(\phiu-\ths,-\iota)}^{\min(\phiu+\ths,\iota)}\delta(\phisat)\\
 &\hspace{3 cm}\times\frac{\cos(\phisat)}{\sqrt{\cos^2(\phisat-\phiu)-\cos^2(\ths)}}d\phisat,
\end{align}
where 
\begin{align}
\label{eq:gamma}
\nonumber
&\gamma(\rs)=2(\rmin+\rEarth)^2\\
&\times\int_{\max(\phiu-\ths,-\iota)}^{\min(\phiu+\ths,\iota)}\delta(\phisat)\cos(\phisat)\cos^{-1}\left({\frac{\cos(\ths)}{\cos(\phisat-\phiu)}}\right) d\phisat
\end{align}
and $\ths$ is the polar angle difference between the serving satellite and the user which is given by $\ths=\cos^{-1}\left(1-\frac{\rs^2-\rmin^2}{2(\rmin+\rEarth)\rEarth}\right)$. Equation~\eqref{eq:pdf_r0} is valid for $\ths\geq |{\phiu}| - \iota$ and $\rs \in [\rmin, 2\rEarth+\rmin]$ while $f_{\Rs}\left(\rs\right)=0$ otherwise.
\end{lem}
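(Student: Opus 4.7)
The plan is a classical void-probability argument for the nearest-neighbor distance, adapted to the inhomogeneous intensity on the orbital shell and the geometric constraints of the constellation. Since $\Rs$ is by definition the smallest satellite distance, the event $\{\Rs>\rs\}$ coincides with the event that the spherical cap centered on the user, of central-angle radius $\ths(\rs)$, contains no point of $\xi$. Setting $n=0$ in \eqref{void prob} and writing
\begin{align*}
\gamma(\rs)\triangleq\iint_{\mathcal{A}(\rs)}\delta(\phisat)(\rmin+\rEarth)^{2}\cos\phisat\,d\phisat\,d\lamsat,
\end{align*}
one gets $F_{\Rs}(\rs)=1-\exp(-\gamma(\rs))$, so $f_{\Rs}(\rs)=\gamma'(\rs)\exp(-\gamma(\rs))$. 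The problem therefore reduces to (i) rewriting $\gamma(\rs)$ as a single latitude integral and (ii) differentiating it in $\rs$.

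For step (i), I would parametrize $\mathcal{A}(\rs)$ by satellite latitude and longitude and perform the inner $\lamsat$-integral in closed form. Using spherical trigonometry, the set of admissible longitudes for a fixed $\phisat$ is a symmetric interval $[-\lambda^{*}(\phisat),\lambda^{*}(\phisat)]$ with $\lambda^{*}(\phisat)=\cos^{-1}(\cos\ths/\cos(\phisat-\phiu))$, while the latitude range is the intersection of the great-circle reach $|\phisat-\phiu|\leq\ths$ with the inclination band $|\phisat|\leq\iota$, i.e.\ $\phisat\in[\max(\phiu-\ths,-\iota),\min(\phiu+\ths,\iota)]$. Integrating out $\lamsat$ contributes the factor $2\lambda^{*}(\phisat)$ and reproduces exactly \eqref{eq:gamma}; the side condition $\ths\geq|\phiu|-\iota$ is just the requirement that this intersection be non-empty, and for smaller $\ths$ the integration region is empty so $f_{\Rs}=0$ as claimed.

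For step (ii), I would chain the differentiation through $\ths(\rs)$. Differentiating $\cos\ths=1-(\rs^{2}-\rmin^{2})/(2(\rmin+\rEarth)\rEarth)$ gives $d\ths/d\rs=\rs/(\rEarth(\rmin+\rEarth)\sin\ths)$. Applying Leibniz under the latitude integral, the boundary terms vanish: at the cap edges $\phisat=\phiu\pm\ths$ one has $\lambda^{*}=0$ so the integrand is already zero, while the inclination endpoints $\pm\iota$ do not depend on $\rs$. The pointwise derivative is $(d/d\ths)\cos^{-1}(\cos\ths/\cos(\phisat-\phiu))=\sin\ths/\sqrt{\cos^{2}(\phisat-\phiu)-\cos^{2}\ths}$. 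Multiplying by $d\ths/d\rs$, the $\sin\ths$ factors cancel and the prefactor collapses through $(\rmin+\rEarth)^{2}/(\rEarth(\rmin+\rEarth))=1+\rmin/\rEarth$, producing exactly \eqref{eq:pdf_r0}.

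The main obstacle I anticipate is step (i): obtaining the closed-form half-width $\lambda^{*}(\phisat)$ from the spherical geometry and correctly intersecting the great-circle cap with the inclination band under all positional relationships of $\phiu$ and $\iota$. Once that parametrization is in place, step (ii) is essentially the chain rule together with the routine check that the Leibniz boundary terms vanish, followed by tidy algebraic simplification.
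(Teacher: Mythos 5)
Your proposal is correct and follows essentially the same route as the paper's proof: the void-probability identity $F_{\Rs}(\rs)=1-\exp(-\gamma(\rs))$ over the spherical cap $\mathcal{A}(\rs)$, the closed-form longitude half-width $\cos^{-1}\left(\cos(\ths)/\cos(\phisat-\phiu)\right)$ obtained from the cap geometry, and differentiation through $\ths(\rs)$. Your treatment of the chain rule and the vanishing Leibniz boundary terms is in fact more explicit than the paper's, which compresses that step into a single sentence.
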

\begin{proof}
See Appendix~\ref{Appen:proof of Lemma 2}.
\end{proof}

When the density of satellites is presumed to be uniform, i.e., it is not a function of latitude, the PDF of the serving distance can be obtained in closed form as follows. 

\begin{lem}
\label{Lem:r0 uniform}
The PDF of the serving distance $\Rs$ when the satellites are distributed uniformly with constant intensity given in Lemma~\ref{lem:constant Intensity} is 
\begin{align}
\label{eq:pdf_r0_uni}
f_{\Rs}\left(\rs\right)=\frac{N\rs}{2\rEarth(\rmin+\rEarth)}\exp\left(-N\left(\frac{\rs^2-\rmin^2}{4(\rmin+\rEarth)\rEarth}\right)\right)
\end{align}
for $\rs \in [\rmin, 2\rEarth+\rmin]$ while $f_{\Rs}\left(\rs\right)=0$ otherwise.
\end{lem}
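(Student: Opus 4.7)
The plan is to prove Lemma~\ref{Lem:r0 uniform} directly from the void probability of a homogeneous PPP, exploiting the fact that on a uniform sphere the nearest-neighbour distance distribution is rotationally symmetric and hence independent of $\phiu$. Attempting to specialize Lemma~\ref{Lem:pdf_r0} would work in principle, but it would leave us with a $\phiu$-dependent integral that must be shown to collapse---a nuisance that the direct approach avoids entirely.

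First, I would write $\Prob(\Rs > \rs) = \Prob(\mathcal{N}(\mathcal{B}_{\rs}) = 0)$, where $\mathcal{B}_{\rs}$ denotes the spherical cap on the orbital shell consisting of all points within Euclidean distance $\rs$ of the user. Since $\xi$ is now a homogeneous PPP with intensity $\delta$ given by Lemma~\ref{lem:constant Intensity}, the Poisson void probability (specializing \eqref{void_cnst} to $n=0$) yields $\Prob(\Rs > \rs) = \exp\bigl(-\delta \,|\mathcal{B}_{\rs}|\bigr)$.

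Next, I would compute $|\mathcal{B}_{\rs}|$. Letting $\ths$ denote the Earth-centred angular separation between the user and a point at Euclidean distance $\rs$ on the orbital shell, the law of cosines gives $\cos\ths = (\rEarth^2 + (\rEarth+\rmin)^2 - \rs^2)/\bigl(2\rEarth(\rEarth+\rmin)\bigr)$, and the standard spherical-cap formula yields $|\mathcal{B}_{\rs}| = 2\pi(\rEarth+\rmin)^2(1-\cos\ths)$. A short algebraic manipulation simplifies this to $|\mathcal{B}_{\rs}| = \pi(\rEarth+\rmin)(\rs^2-\rmin^2)/\rEarth$. Plugging in $\delta = N/\bigl(4\pi(\rEarth+\rmin)^2\bigr)$ from Lemma~\ref{lem:constant Intensity} then produces $\Prob(\Rs>\rs) = \exp\bigl(-N(\rs^2-\rmin^2)/\bigl(4\rEarth(\rEarth+\rmin)\bigr)\bigr)$.

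Finally, I would differentiate $F_{\Rs}(\rs) = 1 - \Prob(\Rs>\rs)$ with respect to $\rs$; the chain rule yields exactly \eqref{eq:pdf_r0_uni}. The only cosmetic check is on the support: at $\rs = \rmin$ the cap degenerates to a point, and at $\rs = 2\rEarth + \rmin$ the cap-area formula evaluates to $4\pi(\rEarth+\rmin)^2$, which is the total surface area of the orbital shell, so the expression is consistent across $[\rmin, 2\rEarth+\rmin]$. I do not foresee any serious obstacle---the entire argument is essentially a textbook derivation of the nearest-point distance for a homogeneous PPP on a surface---and the only discipline required is to compute the cap area via elementary spherical geometry rather than by retracing the latitude-dependent integrals of Lemma~\ref{Lem:pdf_r0}.
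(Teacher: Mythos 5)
Your proposal is correct and follows essentially the same route as the paper: both compute $\Prob(\Rs>\rs)$ as the Poisson void probability of the spherical cap under the constant intensity of Lemma~\ref{lem:constant Intensity}, arrive at $F_{\Rs}(\rs)=1-\exp\bigl(-N(\rs^2-\rmin^2)/(4\rEarth(\rmin+\rEarth))\bigr)$, and differentiate. The only difference is cosmetic --- you obtain the cap area $\pi(\rEarth+\rmin)(\rs^2-\rmin^2)/\rEarth$ directly from the law of cosines and the spherical-cap formula rather than by collapsing the latitude integral of Lemma~\ref{Lem:pdf_r0}, which makes your write-up more self-contained than the paper's one-line sketch but is not a different argument.
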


\begin{proof}
{In this proof, the same principles are used as in} Lemma~\ref{Lem:pdf_r0}. However, the integration from a constant density over the cap sphere will reduce to a simple expression. Thus, $ F_{\Rs}\left(\rs\right)=1-\exp\left(-N\left(\frac{\rs^2-\rmin^2}{4(\rmin+\rEarth)\rEarth}\right)\right)$. Calculating the derivative of the CDF w.r.t.\ $\rs$ completes the proof.
\end{proof}

The PDF of the serving distance is also derived in \cite{okati2020downlink} assuming satellites are uniformly distributed as a BPP. It is worth noting that the Taylor series expansion of Lemma~\ref{Lem:r0 uniform} and the serving distance distribution given in \cite[Lemma~2]{okati2020downlink} are the same for the first two terms. The difference between the serving distance in a uniformly distributed constellation given in \cite{okati2020downlink} and the homogeneous Poisson point process in Lemma~\ref{Lem:r0 uniform} is insignificant since the argument of exponential function in \eqref{eq:pdf_r0_uni}, i.e., $N\left(\frac{\rs^2-\rmin^2}{4(\rmin+\rEarth)\rEarth}\right)$, is small.

\subsection{Coverage Probability and Average Data Rate}

The coverage probability is the probability that the SINR at the receiver is higher than the {minimum SINR required to successfully transmit the data. The coverage probability is defined as}
\begin{align}
\label{eq:coverage probability}
\Pc\left(T\right) \triangleq \Prob\left(\SINR > T\right)=\Prob\left(\frac{\ps \Gs \Xs \Rs^{-\alpha}}{\Insp+\sigma^2}>T\right),
\end{align}
where $T$ is the SINR threshold.

Using the above definition, we express the coverage probability of a user in the following theorem. 

\begin{theorem}
\label{theory:coverage_prob_NSP}
The coverage probability for an arbitrarily located user under a Nakagami fading serving channel while both shape parameter and rate parameter of gamma distribution\footnote{Channel gain, being the square of Nakagami random variable, follows a gamma distribution.} are $m_0$, is
\begin{align}
\label{eq:Pc}
\nonumber
\Pc\left(T\right)
&\triangleq \Prob\left(\SINR > T\right)\\\nonumber
&\hspace{-1.2 cm} = \int_{\rmin}^{\rmax}\int_{0}^{\infty} f_{\Xs}(\xs)f_{\Rs}\left(\rs\right)\Bigg[e^{-s\sigma^2}\\
&\hspace{-1.2 cm} \sum_{k=0}^{m_0-1}\hspace{-0 mm}\frac{\sum_{l=0}^{k}\binom{k}{l}\hspace{-1 mm}\left(s\sigma^2\right)^l \hspace{-1 mm}\left(-s\right)^{k-l}\hspace{-1 mm}\frac{\partial^{k-l}}{\partial s^{k-l}}\mathcal{L}_{\Insp}(s)}{k!}\Bigg]_{s=\frac{m_0T\rs^\alpha}{\ps \xs }}\,d\xs d\rs,
\end{align}
where the PDF $f_{\Rs}\left(\rs\right)$ is given in Lemma~\ref{Lem:pdf_r0} (or Lemma~\ref{Lem:r0 uniform}), $f_{\Xs}(\xs)$ is the PDF of $\Xs$ and $\mathcal{L}_{\Insp}\left(s\right)$ is the Laplace transform of interference power $\Insp$ calculated in the next section.

\end{theorem}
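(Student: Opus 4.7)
The plan is to start from the definition \eqref{eq:coverage probability}, condition on the serving distance $\Rs$, the serving shadowing $\Xs$, and the interference $\Insp$, and then exploit the fact that, for integer $m_0$, the serving channel gain $\Gs$ is Gamma-distributed with shape and rate both equal to $m_0$. Concretely, I would rewrite the SINR event as $\Gs > \frac{T(\Insp+\sigma^2)\Rs^\alpha}{\ps \Xs}$, which isolates the randomness of $\Gs$ on one side. Setting the shorthand $s \triangleq \frac{m_0 T \rs^\alpha}{\ps \xs}$ will be convenient because it turns the threshold on $\Gs$ into $m_0 g = s(\Insp+\sigma^2)$, matching the parametrization appearing inside the bracket of \eqref{eq:Pc}.

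Next, I would invoke the closed-form CCDF of a Gamma$(m_0,m_0)$ random variable for integer $m_0$, namely
\begin{equation*}
\Prob(\Gs > g) = e^{-m_0 g}\sum_{k=0}^{m_0-1}\frac{(m_0 g)^k}{k!}.
\end{equation*}
Substituting $m_0 g = s(\Insp+\sigma^2)$ yields, conditionally on $\Rs=\rs$, $\Xs=\xs$, and $\Insp$, the expression
\begin{equation*}
e^{-s\sigma^2}\,e^{-s\Insp}\sum_{k=0}^{m_0-1}\frac{s^k(\Insp+\sigma^2)^k}{k!}.
\end{equation*}
Expanding $(\Insp+\sigma^2)^k$ by the binomial theorem separates the deterministic noise contribution from powers of $\Insp$, leaving the inner random quantity of the form $e^{-s\Insp}\Insp^{k-l}$.

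The crucial step is to remove the conditioning on $\Insp$ by taking expectation term-by-term. Because $\mathcal{L}_{\Insp}(s) = \mathbb{E}[e^{-s\Insp}]$, differentiating under the expectation gives the identity
\begin{equation*}
\mathbb{E}\!\left[\Insp^{k-l}\,e^{-s\Insp}\right] = (-1)^{k-l}\,\frac{\partial^{k-l}}{\partial s^{k-l}}\mathcal{L}_{\Insp}(s),
\end{equation*}
provided the interchange is justified (which will follow from dominated convergence once one observes that $\Insp$ has finite moments of all orders under the physical setup, where the number of interferers is bounded by $N/K-1$). Collecting $s^l(\sigma^2)^l = (s\sigma^2)^l$ and $s^{k-l}(-1)^{k-l} = (-s)^{k-l}$ brings the conditional mean exactly into the bracketed sum in \eqref{eq:Pc}.

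Finally, I would deconditioning on $\Xs$ and $\Rs$ by integrating against their PDFs $f_{\Xs}(\xs)$ and $f_{\Rs}(\rs)$ from Lemma~\ref{Lem:pdf_r0} (or Lemma~\ref{Lem:r0 uniform}), taking $\rs$ over the feasible range $[\rmin,\rmax]$ since the nearest visible satellite must satisfy $\theta_\mathrm{s}\geq\thmin$. The hardest part of the argument is not algebraic but bookkeeping: verifying that the derivative-in-$s$ operation commutes with the interference expectation (so that Laplace-transform derivatives legitimately capture the mixed moments $\mathbb{E}[e^{-s\Insp}\Insp^{k-l}]$), and that $\Gs$ is independent of $\Insp$, $\Rs$, and $\Xs$, so that the CCDF of $\Gs$ can be applied conditionally. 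Once these independence and regularity points are stated, the remaining manipulations are the binomial rearrangement and the final double integral.
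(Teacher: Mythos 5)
Your proposal is correct and follows essentially the same route as the paper's proof in Appendix~B: conditioning on $\Rs$, $\Xs$, and $\Insp$, applying the integer-$m_0$ Gamma CCDF (equivalently the normalized incomplete gamma expansion), performing the binomial expansion of $(\Insp+\sigma^2)^k$, and converting $\mathbb{E}[\Insp^{k-l}e^{-s\Insp}]$ into $(-1)^{k-l}\partial^{k-l}\mathcal{L}_{\Insp}(s)/\partial s^{k-l}$ before deconditioning. Your explicit attention to the interchange of differentiation and expectation and to the independence assumptions is a point the paper leaves implicit, but it does not change the argument.
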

\begin{proof}
See Appendix~\ref{Appen:proof Theorem 1}.
\end{proof}

Let us then move on the average achievable data rate (in bit/s/Hz), which is {the ergodic capacity for a fading communication link derived from Shannon-Hartley theorem normalized to unit bandwidth.}
The average achievable rate is defined as
\begin{align}
\label{eq:data rate def.}
\bar{C} \triangleq\frac{1}{K}~\mathbb{E}\left[\log_2\left(1+\SINR\right)\right].
\end{align}
Unlike for the coverage probability, {frequency reuse} affects the average rate in two opposite directions. One direction is the improvement in SINR due to the reduction in the number of interfering satellites which use the same channel. The other direction which results in lower data rate is induced by a reduction in the availability of the frequency band shared among a group of satellites.

In the following theorem, we calculate the expression for the average rate of a user over Nakagami fading serving channel. The interfering channel gains may follow any arbitrary distribution.

\label{Appen:proof of Theory 3}
\begin{theorem}
\label{theorem:data_rate NSP}
The average data rate of an arbitrarily located user under a Nakagami fading serving channel and any fading or shadowing distribution for interfering channels is given by
\begin{align}
\nonumber
\bar{C}&= \frac{1}{ K}\hspace{-1 mm}\int_{\rmin}^{\rmax}\hspace{-2 mm}\int_{0}^{\infty}\hspace{-2 mm}\int_0^{\infty} f_{\Xs}(\xs)f_{\Rs}\left(\rs\right)\Bigg[e^{-s\sigma^2}\\
&\hspace{-4 mm}\sum_{k=0}^{m_0-1}\hspace{-0 mm}\frac{\sum_{l=0}^{k}\binom{k}{l}\hspace{-1 mm}\left(s\sigma^2\right)^l \hspace{-1 mm}\left(-s\right)^{k-l}\hspace{-1 mm}\frac{\partial^{k-l}}{\partial s^{k-l}}\mathcal{L}_{\Insp}(s)}{k!}\Bigg]_{s=\frac{m_0\left(2^t-1\right)\rs^\alpha}{\ps \xs }}\hspace{-1 cm}dt d\xs d\rs,
\end{align}
where $m_0$ is the parameter of Nakagami fading, and $\mathcal{L}_{\Insp}(s)$ will be given in Lemma~\ref{lem:Laplace NSP} and its corresponding corollaries, which cover some special cases.
\end{theorem}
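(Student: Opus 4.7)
The plan is to reduce Theorem~\ref{theorem:data_rate NSP} to Theorem~\ref{theory:coverage_prob_NSP} by expressing the ergodic capacity as an integral of the coverage probability. Since $\log_2(1+\SINR)$ is a nonnegative random variable, I would start from the tail-integral identity
\begin{align*}
\mathbb{E}\!\left[\log_2\!\left(1+\SINR\right)\right]
= \int_0^{\infty} \Prob\!\left(\log_2(1+\SINR) > t\right)dt,
\end{align*}
and then invert to obtain
\begin{align*}
\mathbb{E}\!\left[\log_2\!\left(1+\SINR\right)\right]
= \int_0^{\infty} \Prob\!\left(\SINR > 2^t-1\right)dt
= \int_0^{\infty}\Pc\!\left(2^t-1\right)dt.
\end{align*}
Inserting the $1/K$ prefactor from the definition in \eqref{eq:data rate def.} then gives $\bar{C} = (1/K)\int_0^\infty \Pc(2^t-1)\,dt$.

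The second step is to substitute the closed-form expression for $\Pc(T)$ obtained in Theorem~\ref{theory:coverage_prob_NSP} with $T$ replaced by $2^t-1$. This amounts to replacing $s = m_0 T r_0^\alpha/(\ps \xs)$ by $s = m_0(2^t-1)r_0^\alpha/(\ps \xs)$ in the bracketed expression involving $e^{-s\sigma^2}$ and the derivatives of $\mathcal{L}_{\Insp}(s)$, while the outer integrals over $\rs$ and $\xs$ carry over unchanged, since the densities $f_{\Rs}(\rs)$ and $f_{\Xs}(\xs)$ do not depend on $t$.

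The third step is to interchange the order of the $t$-integration with the integrations over $\rs$ and $\xs$. After swapping, the $t$-integral becomes the outer integral on the right-hand side of the claimed formula, yielding precisely the stated expression.

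The main technical point, which I would flag but not belabor, is the justification of the interchange of integrals in the last step. Because the integrand is the nonnegative tail $\Prob(\SINR>2^t-1)$ weighted against nonnegative densities, Tonelli's theorem applies directly, so no delicate integrability argument is required; this is actually the easiest part of the proof once the tail-integral identity is in hand. The only mildly subtle ingredient is the initial reformulation via $\mathbb{E}[Y]=\int_0^\infty\Prob(Y>t)\,dt$, which is what allows the rate to inherit, almost verbatim, the analytical structure already developed for the coverage probability.
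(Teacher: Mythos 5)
Your proposal is correct and follows essentially the same route as the paper: the appendix proof likewise applies the tail-integral identity $\mathbb{E}[X]=\int_{t>0}\Prob(X>t)\,dt$ to $\log_2(1+\SINR)$ and then repeats, step for step, the manipulations from the coverage-probability proof with $T$ replaced by $2^t-1$. Your only cosmetic difference is that you invoke Theorem~\ref{theory:coverage_prob_NSP} as a black box and justify the interchange of integrals via Tonelli, whereas the paper re-derives the inner expression inline; the mathematical content is identical.
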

\begin{proof}
See Appendix~\ref{Appen:proof of Theory 2}.
\end{proof}

\subsection{Interference Analysis}
\label{subsec:interference}
In this subsection, we derive the Laplace function of interference which is a key element of the performance expressions in Theorems~\ref{theory:coverage_prob_NSP} and \ref{theorem:data_rate NSP}. We, first, obtain the expression considering a general propagation model which means that no assumption is made regarding the specific expressions of $f_{\Xn}(x_n)$ and $f_{G_n}(g_n)$. 
\begin{lem}
\label{lem:Laplace NSP}
When the server is at distance $\rs \geq \rmin$ from the user and interfering channels experience an arbitrary distributed fading, the Laplace transform of random variable $\Insp$ is 
\begin{align}
\label{eq:Laplace NSP}
\mathcal{L}_{\Insp}(s) &\triangleq\mathbb{E}_{I}\left[e^{-s\Insp}\right]=\sum_{n=0}^{\infty}P_{n}\left(\mathcal{A}\left(\rmax\right)-\mathcal{A}\left(\rs\right)\right)\\
\nonumber
&\hspace{-1.4 cm}\times\hspace{-1 mm}\Bigg(\hspace{-0.5 mm}\int_{\rs}^{\rmax}\hspace{-2 mm}\int_0^{\infty}\hspace{-2 mm}\mathcal{L}_{G_n}(s\ppi x_n r_n^{-\alpha})f_{\Xn}(x_n)  f_{R_n|\Rs}(r_n|\rs)d\xs dr_n\hspace{-1 mm}\Bigg)^n,
\end{align}
where 
\begin{equation}
    \label{interfering_dist}
    f_{R_n|\Rs}(r_n|\rs)=\frac{{d\gamma(r_n)}/{d r_n}}{\gamma(\rmax)-\gamma(\rs)}
\end{equation}
is the probability density function of the distance from any visible satellite to the user conditioned on the serving distance \cite[Lemma~3]{okati2020downlink}. The parameter $\mathcal{A}\left(\rmax\right)$ represents the spherical cap where all satellites that can be viewed by the user exist while $\mathcal{A}\left(\rs\right)$ is the cap above the user, empty of satellites and with the serving satellite on its border (base of the cap). The function $f_{\Xn}(x_n)$ denotes the shadowing PDF for the $n$th interfering channel.
\end{lem}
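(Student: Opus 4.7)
The plan is to start from the definition $\mathcal{L}_{\Insp}(s) \triangleq \mathbb{E}[e^{-s\Insp}]$ with $\Insp = \sum_{n=1}^{\NI} \ppi G_n \Xn R_n^{-\alpha}$ and to decompose the expectation by conditioning on $\NI$, the number of co-channel interfering satellites located in the annular spherical region $\mathcal{A}(\rmax)\setminus\mathcal{A}(\rs)$ (i.e., those visible to the user but farther than the serving satellite, as dictated by the nearest-satellite association rule). For a NPPP with the intensity of Lemma~\ref{lem:Intensity ppp}, the count of points in this annulus is Poisson with mean equal to the integral of $\delta(\phisat)$ over the region, and $P_n(\mathcal{A}(\rmax)-\mathcal{A}(\rs))$ is precisely the void-formula \eqref{void prob} applied to that annulus.

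Next, I would invoke the standard conditional property of a Poisson point process: given that the process has $n$ points in a bounded region, those points are i.i.d.\ with density proportional to the intensity measure restricted to the region. Projected onto the radial coordinate, the induced common density of $R_n$ conditioned on $\Rs=\rs$ is exactly $f_{R_n|\Rs}(r_n|\rs) = \frac{d\gamma(r_n)/dr_n}{\gamma(\rmax)-\gamma(\rs)}$ as stated in \eqref{interfering_dist}, since $\gamma(\cdot)$ from Lemma~\ref{Lem:pdf_r0} is the mean measure of the visible cap up to radius $r_n$. Because $\{G_n\}$, $\{\Xn\}$, and the positions are mutually independent across $n$, the conditional Laplace transform factorizes,
\begin{equation}
\mathbb{E}\!\left[e^{-s\Insp}\,\big|\,\NI=n,\Rs=\rs\right] = \left(\mathbb{E}\!\left[e^{-s \ppi G_n \Xn R_n^{-\alpha}}\right]\right)^{\!n},
\end{equation}
and the inner single-satellite expectation is evaluated sequentially: first in $G_n$, which by definition yields $\mathcal{L}_{G_n}(s\ppi \Xn R_n^{-\alpha})$, and then in $\Xn$ and $R_n$ against their independent densities $f_{\Xn}(x_n)$ and $f_{R_n|\Rs}(r_n|\rs)$, producing the double integral inside the parentheses of \eqref{eq:Laplace NSP}. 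Unconditioning on $\NI$ via $\sum_{n=0}^\infty P_n(\cdot)(\cdot)^n$ delivers the claimed expression.

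The main obstacle, and arguably the only nontrivial step, is rigorously justifying the conditional i.i.d.\ form of the interferer distances. This needs two ingredients: (i) recognizing that the event $\{\Rs=\rs\}$ is equivalent to the annulus $\mathcal{A}(\rmax)\setminus\mathcal{A}(\rs)$ being the effective support of the interferers, and (ii) applying the restriction-and-conditioning property of a NPPP on that annulus to obtain \eqref{interfering_dist}, exactly as in \cite[Lemma~3]{okati2020downlink}. Once those are in place, the remainder is a routine exchange of expectation and summation via Fubini over independent families, so no further delicate measure-theoretic argument is needed.
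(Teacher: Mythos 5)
Your proposal is correct and follows essentially the same route as the paper's proof: condition on the Poisson count of interferers in the annulus $\mathcal{A}(\rmax)-\mathcal{A}(\rs)$, use the conditional i.i.d.\ property of the points (with radial density $f_{R_n|\Rs}$) together with the independence of $G_n$ and $\Xn$ to factorize the conditional Laplace transform into an $n$-fold product of a single double integral, and then sum over $n$ with weights $P_n(\cdot)$. The only cosmetic difference is the order of operations --- the paper takes the inner expectations over $G_n,\Xn$ and $R_n$ before averaging over $\mathcal{N}$, whereas you condition on the count first --- but the ingredients and the resulting expression are identical.
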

\begin{proof}
See Appendix \ref{Appen:proof Lemma 5}.
\end{proof}

In the special cases, where channels experience Nakagami fading without any shadowing, Lemma~\ref{lem:Laplace NSP} can be reduced into the following corollary. The expression thereof is obtained by substituting the Laplace function of a gamma random variable, i.e., $\mathcal{L}_{G_n}(z)=\frac{\mn^{\mn}}{(\mn+z)^{\mn}}$, where $m_n$ stands for both shape parameter and rate parameter of gamma distribution for the $n$th link.

\begin{cor}
When the interfering channels experience only Nakagami fading (no shadowing), the Laplace function of interference can be written as 
\label{cor:LapNakagami}
\begin{align}
\label{eq:Laplace NSP no shadow}
\nonumber
\mathcal{L}_{\Insp}(s)&=\sum_{n=0}^{\infty}P_{n}\left(\mathcal{A}\left(\rmax\right)-\mathcal{A}\left(\rs\right)\right)\\
&\hspace{-1 cm}\times\Bigg(\int_{\rs}^{\rmax}\hspace{-2 mm}\frac{\mn^{\mn}}{(\mn+s\ppi r_n^{-\alpha})^{\mn}}  f_{R_n|\Rs}(r_n|\rs) dr_n\Bigg)^n,
\end{align}
where $\mathcal{A}\left(\rmax\right)$ and $\mathcal{A}\left(\rs\right)$ are the visible cap and the null cap above the user, respectively. The PDF $f_{R_n|\Rs}(r_n|\rs)$ is given in \eqref{interfering_dist}, and $m_n$ is the Nakagami fading parameter for $n$th link.
\end{cor}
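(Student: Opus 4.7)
The plan is to specialize the general Laplace-transform formula in Lemma~\ref{lem:Laplace NSP} to the case of Nakagami-$m$ fading with no shadowing, treating this as a direct substitution rather than an independent derivation. Essentially, the structural part of the computation (the Poisson sum over the number of interferers and the conditional density $f_{R_n|\Rs}(r_n|\rs)$ of the interferer-to-user distance) has already been done in Lemma~\ref{lem:Laplace NSP}; only the per-satellite factor $\int_0^\infty \mathcal{L}_{G_n}(s\ppi x_n r_n^{-\alpha}) f_{\Xn}(x_n)\,dx_n$ needs to be evaluated.

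First, I would account for the absence of shadowing by setting $\Xn \equiv 1$ almost surely, which is equivalent to taking $f_{\Xn}(x_n) = \delta(x_n - 1)$ in the inner integral. Then the $x_n$ integration collapses by the sifting property of the Dirac delta, leaving just $\mathcal{L}_{G_n}(s\ppi r_n^{-\alpha})$ inside the $r_n$-integral. Next, I would invoke the stated fact that the squared Nakagami-$m$ amplitude is gamma distributed with both shape and rate equal to $\mn$; standard computation of the moment generating function of a Gamma$(\mn,\mn)$ variable gives $\mathcal{L}_{G_n}(z) = \mn^{\mn}/(\mn+z)^{\mn}$. Substituting $z = s\ppi r_n^{-\alpha}$ yields the factor $\mn^{\mn}/(\mn + s\ppi r_n^{-\alpha})^{\mn}$ that appears inside the integrand of \eqref{eq:Laplace NSP no shadow}.

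Finally, inserting this per-satellite factor back into the expression of Lemma~\ref{lem:Laplace NSP} and keeping the outer Poisson sum together with the conditional density $f_{R_n|\Rs}(r_n|\rs)$ unchanged directly produces \eqref{eq:Laplace NSP no shadow}. I do not anticipate any technical obstacle here: the only two ingredients needed are the gamma MGF and the degenerate shadowing distribution, both of which are standard and require no justification beyond recalling the definitions. If one wishes to avoid the Dirac-delta formalism for the no-shadowing case, an equivalent route is to drop the $\Xn$ factor in \eqref{I_NSP} from the outset and redo the proof of Lemma~\ref{lem:Laplace NSP} without the shadowing expectation, arriving at the same result.
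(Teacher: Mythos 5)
Your proposal is correct and matches the paper's own (informal) justification: the paper obtains this corollary exactly by substituting the gamma Laplace transform $\mathcal{L}_{G_n}(z)=\mn^{\mn}/(\mn+z)^{\mn}$ into Lemma~\ref{lem:Laplace NSP} and suppressing the shadowing variable, which is what you do via the degenerate distribution $\Xn\equiv 1$. No gaps; your treatment is, if anything, slightly more explicit than the paper's one-line remark.
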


When the intensity of the PPP is presumed to be constant (regardless of the latitude), the Laplace function can be obtained from the following corollary by simply substituting $\gamma(\cdot)$ in \eqref{interfering_dist} by the product of the density in Lemma~\ref{lem:constant Intensity} and the surface area of the spherical cap formed by the distance between the user and the given interfering satellite.
\begin{cor}
\label{cor:LapUniform}
The Laplace function of interference when the satellites are distributed uniformly with constant intensity, and their channels experience Nakagami fading, is given by 
\begin{align}
\label{eq:Laplace NSP const}
\nonumber
&\mathcal{L}_{\Insp}(s)\\\nonumber
&=\sum_{n=0}^{\infty}\frac{1}{n!}\left(\frac{N \left(\rmax^2-\rs^2\right)}{4\rEarth\left(\rEarth+\rmin\right)}\right)^n  \exp \left(- \frac{N \left(\rmax^2-\rs^2\right)}{4\rEarth\left(\rEarth+\rmin\right)}\right) \\
&\times\Bigg(\int_{\rs}^{\rmax}\hspace{-2 mm}\int_0^{\infty}\hspace{-2 mm}\frac{2r_n\mn^{\mn}f_{\Xs}(\xs)}{\left(\rmax^2-\rs^2\right)(\mn+s\ppi r_n^{-\alpha})^{\mn}}  \frac{}{}d\xs dr_n\Bigg)^n,
\end{align}
where  $f_{\Xn}(x_n)$ is the PDF of the shadowing component and $m_n$ is the fading parameter for Nakagami fading.
\end{cor}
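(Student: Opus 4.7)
The plan is to specialize Lemma~\ref{lem:Laplace NSP} to the homogeneous case while keeping the Nakagami fading assumption on interfering channels. The overall structure of \eqref{eq:Laplace NSP} is preserved; only three ingredients need to be replaced with their constant-intensity counterparts: the Poisson mass $P_n\!\left(\mathcal{A}(\rmax)-\mathcal{A}(\rs)\right)$, the conditional interference distance density $f_{R_n|\Rs}(r_n|\rs)$, and the interfering-gain Laplace transform $\mathcal{L}_{G_n}(\cdot)$.

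First, I would evaluate $P_n(\mathcal{A}(\rmax)-\mathcal{A}(\rs))$ under the constant intensity of Lemma~\ref{lem:constant Intensity}. This is exactly the computation already made in \eqref{void_cnst}, except that now the excluded inner cap has its border at $\rs$ instead of $\rmin$; substituting $\delta$ from \eqref{eq: constant lambda} into the surface-area expression for the annular cap between $\rs$ and $\rmax$ gives a Poisson mean $N(\rmax^2-\rs^2)/(4\rEarth(\rEarth+\rmin))$, which produces the Poisson factor shown in the first line of \eqref{eq:Laplace NSP const}. Next, I would simplify $f_{R_n|\Rs}(r_n|\rs)$ from \eqref{interfering_dist}. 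Because $\delta$ is constant, the same surface-area argument yields $\gamma(r)=N(r^2-\rmin^2)/(4\rEarth(\rEarth+\rmin))$, so that $d\gamma(r_n)/dr_n = N r_n/(2\rEarth(\rEarth+\rmin))$ while $\gamma(\rmax)-\gamma(\rs)=N(\rmax^2-\rs^2)/(4\rEarth(\rEarth+\rmin))$; the constants cancel and the conditional distance density collapses to $2r_n/(\rmax^2-\rs^2)$, which is precisely the factor appearing in the integrand of \eqref{eq:Laplace NSP const}.

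Finally, substituting the gamma-distribution Laplace transform $\mathcal{L}_{G_n}(z)=m_n^{m_n}/(m_n+z)^{m_n}$ with $z=s\ppi x_n r_n^{-\alpha}$, together with the above two specializations, into the general expression \eqref{eq:Laplace NSP} yields the claimed formula. I do not expect any conceptual obstacle: the corollary is a pure bookkeeping composition of Lemma~\ref{lem:Laplace NSP}, Lemma~\ref{lem:constant Intensity}, and the known Laplace transform of a gamma random variable. The only place where care is warranted is checking that the shadowing integration over $x_n$ sits outside the Nakagami Laplace kernel (so the interferers' shadowing PDF is still carried through the derivation), which explains why an $f_{\Xn}$-type factor and a corresponding $dx_n$ integration survive in \eqref{eq:Laplace NSP const} even though the corollary's title emphasizes the Nakagami assumption.
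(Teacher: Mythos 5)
Your proposal is correct and matches the paper's route exactly: the paper offers no separate proof for this corollary, merely noting that it follows by substituting the constant intensity of Lemma~\ref{lem:constant Intensity} into the Poisson mass $P_n$ and into $\gamma(\cdot)$ of \eqref{interfering_dist}, together with the gamma-distribution Laplace transform, and your computation of $\gamma(r)=N(r^2-\rmin^2)/\bigl(4\rEarth(\rEarth+\rmin)\bigr)$ and of the collapsed conditional density $2r_n/(\rmax^2-\rs^2)$ is precisely that bookkeeping. One remark: your substitution $z=s\ppi x_n r_n^{-\alpha}$ correctly places $x_n$ \emph{inside} the kernel, so the faithful specialization of Lemma~\ref{lem:Laplace NSP} reads $(m_n+s\ppi x_n r_n^{-\alpha})^{-m_n}$, whereas the printed \eqref{eq:Laplace NSP const} omits $x_n$ from the kernel (and writes $f_{\Xs}(\xs)$ where $f_{\Xn}(x_n)$ is meant) --- this is a typographical slip in the paper, not a gap in your argument.
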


The following corollary presents the Laplace function of interference in closed-form, under the assumptions given in Corollary~\ref{cor:LapUniform} and additionally excluding shadowing from the propagation model.
\begin{cor}
\label{cor:cor1+cor2}
The Laplace function of interference, when the satellites are distributed uniformly with constant intensity and their channels experience only Nakagami fading (no shadowing), is given by 
\begin{align}
\nonumber
&\mathcal{L}_{\Insp}(s)\\\nonumber
&=\sum_{n=0}^{\infty}\frac{1}{n!}\left(\frac{N \left(\rmax^2-\rs^2\right)}{4\rEarth\left(\rEarth+\rmin\right)}\right)^n  \exp \left(- \frac{N \left(\rmax^2-\rs^2\right)}{4\rEarth\left(\rEarth+\rmin\right)}\right) \\\nonumber
&\hspace{0 cm}\times\frac{1}{\left(\rmax^2-\rs^2\right)}\Bigg[\rmax^2\,{}_2F_1\left(-\frac{2}{\alpha},m_n;\frac{\alpha-2}{\alpha};-\, \frac{s\ppi \rmax^{-\alpha}}{m_n}\right)^n\\
&\hspace{0 cm}-{\rs^2}\,{}_2F_1\left(-\frac{2}{\alpha},m_n;\frac{\alpha-2}{\alpha};-\, \frac{s\ppi \rs^{-\alpha}}{m_n}\right)^n\Bigg],
\end{align}
where $_2F_1\left(\cdot,\cdot;\cdot;\cdot\right)$ is the Gauss's hyper-geometric function and $m_n$ is the fading parameter.
\end{cor}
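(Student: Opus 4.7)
The plan is to specialize Corollary~\ref{cor:LapUniform} to the no-shadowing case and then evaluate the resulting one-dimensional integral in closed form using a hypergeometric antiderivative. First, I would drop the shadowing component by setting $\Xn \equiv 1$ (equivalently, taking $f_{\Xn}$ as a point mass at one). This collapses the inner double integral over $(\xs, r_n)$ in Corollary~\ref{cor:LapUniform} into
\begin{equation*}
\int_{\rs}^{\rmax}\frac{2 r_n\, m_n^{m_n}}{(\rmax^2-\rs^2)\bigl(m_n + s\ppi r_n^{-\alpha}\bigr)^{m_n}}\,dr_n
= \frac{2}{\rmax^2-\rs^2}\int_{\rs}^{\rmax}\frac{r_n}{\bigl(1 + (s\ppi/m_n)\, r_n^{-\alpha}\bigr)^{m_n}}\,dr_n.
\end{equation*}
The Poisson void-probability factor out front (inherited from Lemma~\ref{lem:constant Intensity}) is unchanged, so the overall structure of the series is already in place; only this inner integral needs closed-form evaluation.

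Second, I would apply the antiderivative identity
\begin{equation*}
\int r\,\bigl(1 + c r^{-\alpha}\bigr)^{-m}\,dr = \frac{r^2}{2}\,{}_2F_1\!\left(-\tfrac{2}{\alpha},\, m;\, \tfrac{\alpha-2}{\alpha};\, -c r^{-\alpha}\right) + \text{const.}
\end{equation*}
with $c = s\ppi/m_n$. Evaluating from $\rs$ to $\rmax$ and combining the prefactor $2/(\rmax^2-\rs^2)$ with the $r^2/2$ from the antiderivative yields precisely the bracket
\begin{equation*}
\frac{1}{\rmax^2-\rs^2}\Bigl[\rmax^2\,{}_2F_1\!\left(-\tfrac{2}{\alpha},m_n;\tfrac{\alpha-2}{\alpha};-\tfrac{s\ppi\rmax^{-\alpha}}{m_n}\right) - \rs^2\,{}_2F_1\!\left(-\tfrac{2}{\alpha},m_n;\tfrac{\alpha-2}{\alpha};-\tfrac{s\ppi\rs^{-\alpha}}{m_n}\right)\Bigr]
\end{equation*}
displayed in the corollary. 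Substituting this closed-form expression back into the Poisson-weighted sum from Corollary~\ref{cor:LapUniform} (so that the $n$th term carries the $n$th power of the bracket along with the unchanged Poisson mass factor) gives the claimed formula.

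The main obstacle is verifying the hypergeometric antiderivative with exactly the stated parameters. I would do this by expanding $F(r) := \tfrac{r^2}{2}\,{}_2F_1(-2/\alpha,\, m;\, 1-2/\alpha;\, -cr^{-\alpha})$ as a power series in $-cr^{-\alpha}$, differentiating term by term, and showing that the ratio $(-2/\alpha)_k/(1-2/\alpha)_k$ telescopes against the factor $(2-\alpha k)$ produced by differentiating $r^{2-\alpha k}$, leaving the coefficient $(m)_k/k!$ at order $k$. The resulting series $r\sum_{k\geq 0}(m)_k(-cr^{-\alpha})^k/k!$ is the binomial series for $r(1+cr^{-\alpha})^{-m}$, confirming the identity. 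Once this step is nailed down, the rest of the argument is bookkeeping and substitution.
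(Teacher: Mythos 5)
Your proof is correct and follows the same route the paper implicitly takes for this corollary: specialize Corollary~\ref{cor:LapUniform} to degenerate (unit) shadowing and evaluate $\int r\,(1+cr^{-\alpha})^{-m_n}\,dr$ via the ${}_2F_1(a,m_n;a+1;\cdot)$ antiderivative with $a=-2/\alpha$, whose term-by-term verification (the telescoping of $(a)_k/(a+1)_k$ against the factor $2-\alpha k$) you sketch accurately. One remark: your derivation correctly places the $n$th power on the \emph{entire} bracketed difference $\frac{1}{\rmax^2-\rs^2}\bigl[\rmax^2\,{}_2F_1(\cdot)-\rs^2\,{}_2F_1(\cdot)\bigr]$, whereas the corollary as printed attaches the exponent $n$ to each ${}_2F_1$ factor separately --- since $(A-B)^n\neq A^n-B^n$, your version is the one that actually follows from Corollary~\ref{cor:LapUniform}, and the printed display appears to be a typographical slip.
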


{Finally, using the function given in} \cite[Eq.~9.100]{table} and substitution from special parameter values, the above can be reduced to elementary functions. For instance, when $m=1$ and $\alpha=2$, we have
\begin{align}
\label{lap:Rayleigh}
\nonumber
&\mathcal{L}_{\Insp}(s)\\\nonumber
&=\sum_{n=0}^{\infty}\frac{1}{n!}\left(\frac{N \left(\rmax^2-\rs^2\right)}{4\rEarth\left(\rEarth+\rmin\right)}\right)^n  \exp \left(- \frac{N \left(\rmax^2-\rs^2\right)}{4\rEarth\left(\rEarth+\rmin\right)}\right) \\
&\hspace{1 cm}\times \left(1+\frac{s\ppi}{\left(\rmax^2-\rs^2\right)}\ln\left(\frac{k+\rs^2}{k+\rmax^2}\right)\right).
\end{align}

To perform frequency reuse, we assign each satellite randomly and independently to a particular frequency channel. Therefore, the satellites assigned to each of the orthogonal frequency channels form a thinned version of the original PPP with intensity $\delta(\phisat)/K$. Since thinning preserves the Poisson point process according the thinning theorem of PPP \cite{blaszczyszyn2018stochastic}, we can take into account the effect of frequency reuse by substitution $\delta(\phisat)\to \delta(\phisat)/K$ in Laplace function of interference (in Lemma~\ref{lem:Laplace NSP} or the corresponding corollaries). Since the same frequency channel is used by the user and its nearest satellite, the frequency reuse has no effect on the original value of intensity that is used to obtain the PDF of the distance from the user to the server in Lemma~\ref{Lem:pdf_r0}.

\section{Numerical Results}
\label{sec:Numerical Results}

In this section, we provide numerical results to study the effect of different network parameters on  coverage probability and average data rate using the analytical expressions obtained in Section~\ref{sec:cov}. The performance of the network is evaluated in terms of satellite altitude and the number of orthogonal frequency channels, which provides important guidelines into the satellite network design. Furthermore, our analytical derivations are all verified through Monte Carlo simulations.

{In the propagation model, we consider the large-scale attenuation with path loss exponent $\alpha=2$,} the small-scale Nakagami-$m$ fading with integer $m\in \{1,2,3\}$, and lognormal shadowing. The lognormal shadowing is represented as $\Xs=10^{X_0/10}$ such that $X_0$ has a normal distribution with mean $\mus=0$ and standard deviation $\devs=9$ decibels. Thus, the PDF of lognormal shadowing is
\begin{align}
f_{\Xs}(\xs) = \frac{10}{\ln(10)\sqrt{2\pi}\devs\xs}\exp\left({-\frac{1}{2}\left(\frac{10\log_{10}(\xs)-\mus}{\devs}\right)^2}\right).
\end{align}
The number of orthogonal channels is set to $K=10$ in all the numerical results unless otherwise stated. For the reference simulations, satellites are placed uniformly on orbits centered at Earth's center with radius $\rEarth+\rmin$.

\begin{figure}[t]
    \includegraphics[trim = 0mm 0mm 0mm 0mm, clip,width=0.5\textwidth]{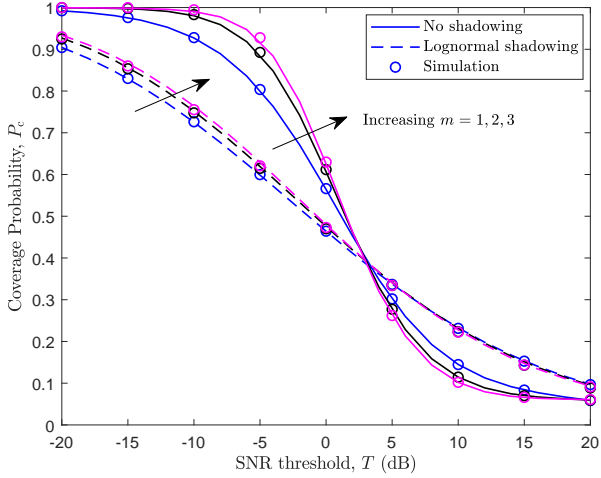}
    \caption{Theorem~\ref{theory:coverage_prob_NSP} verification with simulations when $\phiu=25^\circ$, $\iota=53^\circ$, $m\in\{1, 2, 3\}$, and $\thmin=10^\circ$.}
    \label{fig:CoverageVsT}
\end{figure}

Figure~\ref{fig:CoverageVsT} verifies our derivations given in Theorem~\ref{theory:coverage_prob_NSP} for $53^\circ$ inclined orbits and a user located at $25^\circ$ latitude. The total number of satellites and constellation altitude are chosen to be 2000 and 500~km, respectively. As shown in the figure, the markers that depict the Monte Carlo simulation results are completely matched with the lines that represent our theoretical expressions. As the chance of the user being in outage increases, shadowing affects the coverage probability more positively, the reason being that shadowing randomness increases the chance of a user with poor average $\SINR$ to be in coverage. Larger values of $m$ correspond to higher elevation angles and, consequently, less multi-path distortion, which result in slightly better coverage, but shadowing masks the effect of fading at large.

\begin{figure}[t]
    \includegraphics[trim = 0mm 0mm 0mm 0mm, clip,width=0.5\textwidth]{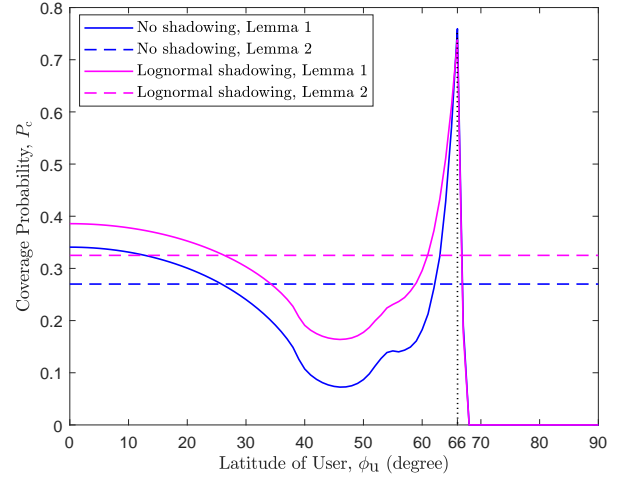}
    \caption{Theoretical coverage probability on different users' latitudes, $T=5$ dB, $\rmin=500$  km, $\iota=53^\circ$, $m=2$, and $\thmin=10^\circ$. The expression given in Theorem~\ref{theory:coverage_prob_NSP} is used to plot this figure.}
    \label{fig:CovVsPhiV1}
\end{figure}

\begin{figure}[t]
    \includegraphics[trim = 0mm 0mm 0mm 0mm, clip,width=0.5\textwidth]{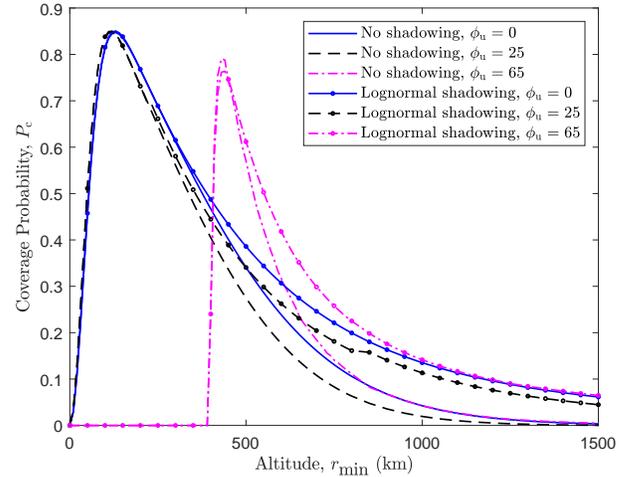}
    \caption{Effect of altitude on coverage probability when $T=5$ dB, $\phiu=\{0^\circ,25^\circ,65^\circ\}$, $\iota=53^\circ$, $m=2$, and $\thmin=10^\circ$. The expression given in Theorem~\ref{theory:coverage_prob_NSP} is used to plot this figure.}
    \label{fig:CoverageVsH}
\end{figure}

The effect of user's latitude on coverage probability is depicted in Fig.~\ref{fig:CovVsPhiV1}. The solid lines show the coverage probability when intensity of PPP is the function given in Lemma~\ref{lem:Intensity ppp} which forms a nonhomogeneous PPP with the intensity increasing when the user moves from the equator to the polar regions towards the inclination limits. As the user moves to higher latitudes, the performance becomes more unreliable due to increase in the density of satellites that use the same frequency channel as the user. As the spherical cap above the user (where all visible satellites to the user reside) leaves the constellation borders, the coverage probability starts increasing due to the reduction in the number of visible interferers. The coverage reaches its maximum at about $66^\circ$ where the only (if any) visible satellite to the user is the serving one, i.e., the performance becomes noise-limited. For latitudes larger than $66^\circ$, the coverage converges to zero quickly, since there are no satellites above horizon to serve the user. When the intensity of satellites is selected according to Lemma~\ref{lem:constant Intensity}, the coverage probability remains constant (dashed lines) all over the Earth for any latitude.

Figure~\ref{fig:CoverageVsH} {illustrates the probability of coverage at different latitudes of the user.} For all propagation environments, the coverage probability increases to reach its maximum value as the altitude increases which is then followed by a decline due to the rise in the number of visible interfering satellites. The optimum altitude for $\phiuser=0^\circ$ is slightly larger than $\phiuser=25^\circ$, the reason being that the intensity of satellites, and consequently the number of interferers, is higher at upper latitudes. When the user's latitude is set to $\phiuser=65^\circ$, which means that the user is located out of the constellation borders ($\phiu>\iota=53^\circ$), a larger altitude is crucial for the constellation so that the user can be served by a satellite within its visible range. As a result, for altitudes lower than about $400$~km, no visible satellite is available to serve the user which results in zero coverage probability.

\begin{figure}[t]
    \includegraphics[trim = 0mm 0mm 0mm 0mm, clip,width=0.5\textwidth]{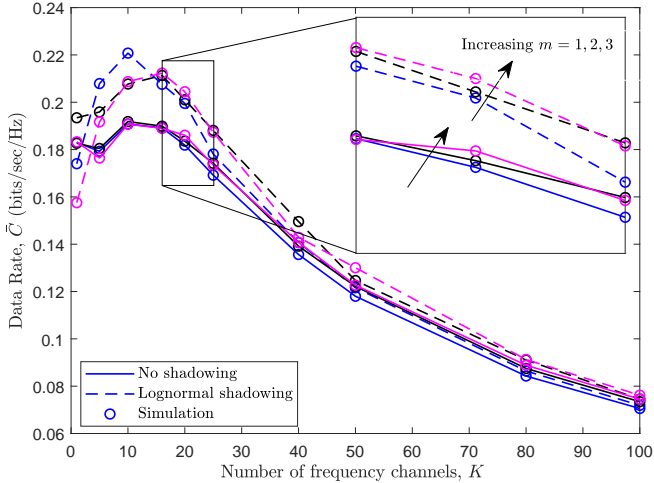}
    \caption{Theorem~\ref{theorem:data_rate NSP} verification with simulations when $\phiu=25^\circ$, $\iota=53^\circ$, $m\in\{1, 2, 3\}$, and $\thmin=10^\circ$.}
    \label{fig:DRVerification}
\end{figure}

Figure~\ref{fig:DRVerification} verifies the derivations for average data rate of a user at the latitude of $25^\circ$. As shown in the figure, the simulation results are perfectly in line with theoretical expressions in Theorem~\ref{theorem:data_rate NSP}. {The disparate behaviour of the curves is caused by the two opposite effects of frequency reuse on the average achievable rate.} As the number of frequency bands increases, the total number of interfering satellites on the same frequency band declines which results in an increase in the data rate. On the other hand, by increasing the number of frequency channels, the bandwidth shared among a group of satellites is reduced. 
{An increase in the plot is observed at first as a result of the decrease in interference received power, followed by a drop which is due to comprising only $\frac{1}{K}$ of frequency band.}

Figure~\ref{fig:DRVsPhi}, as a counterpart for Fig.~\ref{fig:CovVsPhiV1},  illustrates the variation of data rate over different latitudes when the intensity of Poisson point process is chosen to be according to Lemma~\ref{lem:Intensity ppp} or, for comparison, Lemma~\ref{lem:constant Intensity}. With intensity being as in Lemma~\ref{lem:Intensity ppp}, data rate varies over the different user's latitudes as shown in Fig.~\ref{fig:DRVsPhi}. For $53^\circ$ inclined orbits, there is a minor decline in data rate which is followed by a sharp rise due to a decrease in the number of visible interfering satellites when the user leaves the inclination limits.

\begin{figure}[t]
    \includegraphics[trim = 0mm 0mm 0mm 0mm, clip,width=0.5\textwidth]{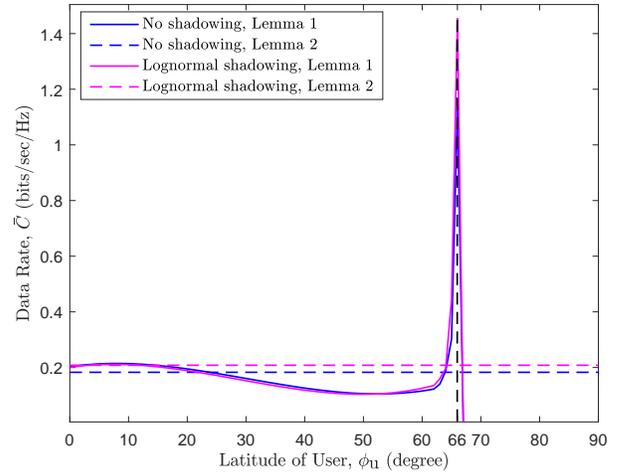}
    \caption{Data rate on different users' latitudes, $K=10$, $\rmin=500$~km, $\iota=53^\circ$, $m=2$, and $\thmin=10^\circ$. The expression given in Theorem~\ref{theorem:data_rate NSP} is used to plot this figure.}
    \label{fig:DRVsPhi}
\end{figure}
\begin{figure}[t]
    \includegraphics[trim = 0mm 0mm 0mm 0mm, clip,width=0.5\textwidth]{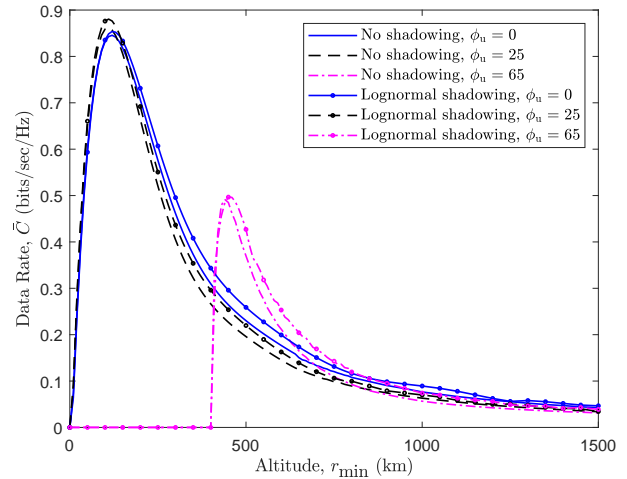}
    \caption{Effect of altitude on data rate when $K=10$, $\phiu=\{0^\circ,25^\circ,65^\circ\}$, $\iota=53^\circ$, $m=2$, and $\thmin=10^\circ$. The expression given in Theorem~\ref{theorem:data_rate NSP} is used to plot this figure.}
    \label{fig:DRVsH}
\end{figure}

The effect of altitude on data rate is depicted in Fig.~\ref{fig:DRVsH} for $K=10$. Similar to Fig.~\ref{fig:CoverageVsH}, from the minimum altitude at which the user is able to visit at least one satellite, the data rate increases rapidly until reaches a maximum point. After the maximum point, the data rate decreases more slowly due to the increase in the number of interfering satellites as well as satellites being at a farther distance from the user. The altitude which maximizes the data rate varies with the user's latitude and, obviously, it has similar value which results in the maximum coverage probability.

\section{Conclusions}
\label{sec:Conclusion}

In this paper, a generic approach to obtain the analytical performance of a massive low Earth orbit network was presented by modeling the satellites' locations as a nonhomogeneous Poisson point process and utilizing the tools from stochastic geometry. The density of the nonhomogeneous Poisson point process is derived from the actual geometry of the constellation which enables us to take into account the non-uniform distribution of satellites across different latitudes. {Our next step was to apply this model to derive analytical expressions for the coverage probability and average data rate of an arbitrarily located user in terms of} the distribution of fading and shadowing as well as the Laplace function of interference. From numerical results, we concluded that shadowing affects the coverage probability positively as the probability of the user being in outage increases. 
Furthermore, we showed how the analytical results allow one to find --- without involved orbital simulations --- optimum values for altitude, the number of orthogonal frequency channels, and user's latitude which result in {the largest coverage and/or throughput}, given constellation parameters.

\appendix
\subsection{Proof of Lemma~\ref{Lem:pdf_r0}}
\label{Appen:proof of Lemma 2}
For an NPPP, the CDF of $\Rs$ can be written as
\begin{align}
    \label{eq:CDF_R0}
   F_{\Rs}\left(\rs\right)=1-\Prob(\Rs>\rs)=1-\Prob(\mathcal{N}=0),
\end{align}
where $\Prob(\mathcal{N}=0)$ is the void probability of PPP in ${\cal A}(\rs)$ that can be obtained from \eqref{void prob} by setting $n=0$.
According to \eqref{void prob}, by integrating from the intensity over the spherical cap above the user, we have
\begin{align}
    \label{eq:CDF_R0_proof}
    \nonumber
   &F_{R_0}\left(r_0\right)\\\nonumber
   &=\hspace{-1 mm}1\hspace{-1 mm}-\exp\hspace{-1 mm}\Bigg(\hspace{-1 mm}-\hspace{-1 mm}\int_{\max(\phiu-\ths,-\iota)}^{\min(\phiu+\ths,\iota)}\hspace{-4 mm}\beta(\phisat)\delta(\phisat) \hspace{-0.5 mm}\left(\rmin+\rEarth\right)^2 \hspace{-0.5 mm}\cos(\phisat)d{\phisat}  \hspace{-1 mm}\Bigg)\\\nonumber
   &\stackrel{(a)}=\hspace{-1 mm}1\hspace{-1 mm}-\exp\hspace{-1 mm}\Bigg(\hspace{-1 mm}-2\left(\rmin+\rEarth\right)^2\int_{\max(\phiu-\ths,-\iota)}^{\min(\phiu+\ths,\iota)}\delta(\phisat)\cos(\phisat)\\
  & \hspace{3 cm}\times\cos^{-1}\left({\frac{\cos(\ths)}{\cos(\phisat-\phiu)}}\right) d\phisat \Bigg),
\end{align}
where $\beta(\phisat)$ is the longitude range inside the spherical cap above the user at latitude $\phisat$. Equality (a) follows from substitution of $\beta(\phisat)$ using the basic geometry. Taking the derivative of \eqref{eq:CDF_R0_proof} with respect to $\rs$ completes the proof of Lemma~\ref{Lem:pdf_r0}. Note that for $\ths\leq |{\phiu}| - \iota$ the CDF given in \eqref{eq:CDF_R0_proof} is zero since the spherical cap formed by polar angle $\ths$ above the latitude $\phiu$ is much farther from the constellation's borders to contain any satellite.

\subsection{Proof of Theorem~\ref{theory:coverage_prob_NSP}}
\label{Appen:proof Theorem 1}

To obtain the expression given \eqref{eq:Pc}, {let us begin with the definition of coverage probability:}
\begin{align}
\Pc\left(T\right)&=
\mathbb{E}_{\Rs}\left[\Prob\left(\SINR>T |\Rs\right)\right]\\\nonumber
&=\int_{\rmin}^{\rmax}\Prob\left(\SINR>T |\Rs=\rs\right)f_{\Rs}\left(\rs\right)d\rs\\\nonumber
&=\int_{\rmin}^{\rmax}\Prob\left(\frac{\ps \Gs \Xs \rs^{-\alpha}}{\Insp+\sigma^2}>T\right)f_{\Rs}\left(\rs\right)\,d\rs\\\nonumber
&=\int_{\rmin}^{\rmax}\mathbb{E}_{\Insp}\hspace{-0 mm}\left[\Prob\left(\hspace{-1 mm}\Gs\Xs>\frac{T\rs^\alpha\left(\Insp+\sigma^2\right)}{\ps }\hspace{-1 mm}\right)\bigg|\Insp>0\right]\hspace{-1 mm}\\\nonumber
&\hspace{1.5 cm}\times f_{\Rs}\left(\rs\right)d\rs.
\end{align}
{Since satellites with elevation angles smaller than $\thmin$ are not visible to the user, the integral has an upper limit.} Then
\begin{align}
\nonumber
&\Pc\left(T\right)=\\
\nonumber
&\stackrel{(a)}= \hspace{-1 mm}\int_{\rmin}^{\rmax}\mathbb{E}_{\Insp}\left[\int_{0}^{\infty} \hspace{-2 mm}f_{\Xs}(\xs)\hspace{-1 mm}\left(\hspace{-1 mm}1-F_{\Gs}\hspace{-1 mm}\left(\frac{T\rs^\alpha\left(\Insp+\sigma^2\right)}{\ps \xs }\right)\hspace{-1 mm}\right)\hspace{-1 mm}\right]\\\nonumber
&\hspace{1.5 cm}\times f_{\Rs}\left(\rs\right)\,d\xs d\rs\\\nonumber
&\stackrel{(b)}= \hspace{-1 mm}\int_{\rmin}^{\rmax}\mathbb{E}_{\Insp}\left[\int_{0}^{\infty}\hspace{-2 mm} f_{\Xs}(\xs)\hspace{-1 mm}\left(\frac{\Gamma\left(m_0,m_0\frac{T\rs^\alpha\left(\Insp+\sigma^2\right)}{\ps \xs }\right)}{\Gamma\left(m_0\right)}\right)\right]\\\nonumber
&\hspace{1.5 cm}\times f_{\Rs}\left(\rs\right)\,d\xs d\rs\\\nonumber
&\stackrel{(c)}= \int_{\rmin}^{\rmax}\hspace{-1 mm}\int_{0}^{\infty} \hspace{-1 mm}f_{\Xs}(\xs)f_{\Rs}\left(\rs\right)e^{-\frac{m_0T\rs^\alpha\sigma^2}{\ps \xs }}\mathbb{E}_{\Insp}\Bigg[e^{-\frac{m_0T\rs^\alpha \Insp}{\ps \xs }}\\\nonumber
&\sum_{k=0}^{m_0-1}\frac{\sum_{l=0}^{k}\binom{k}{l}\left(m_0\frac{T\rs^\alpha\sigma^2}{\ps \xs }\right)^l \left(m_0\frac{T\rs^\alpha \Insp}{\ps \xs }\right)^{k-l}}{k!}\Bigg]\,d\xs d\rs\\
\nonumber
&= \int_{\rmin}^{\rmax}\int_{0}^{\infty} f_{\Xs}(\xs)f_{\Rs}\left(\rs\right)\Bigg[e^{-s\sigma^2}\\
&\sum_{k=0}^{m_0-1}\hspace{-0 mm}\frac{\sum_{l=0}^{k}\binom{k}{l}\hspace{-1 mm}\left(s\sigma^2\right)^l \hspace{-1 mm}\left(-s\right)^{k-l}\hspace{-1 mm}\frac{\partial^{k-l}}{\partial s^{k-l}}\mathcal{L}_{\Insp}(s)}{k!}\Bigg]_{s=\frac{m_0T\rs^\alpha}{\ps \xs }}\,d\xs d\rs.
\end{align}
The substitution in (a) follows from the product distribution of two independent random variables, (b) follows from the distribution of gamma random variable $\Gs$ (being the square of the Nakagami random variable), and (c) is calculated by applying the incomplete gamma function for integer values of $m_0$ to (b).


\subsection{Proof of Theorem~\ref{theorem:data_rate NSP}}
\label{Appen:proof of Theory 2}
Most of the steps in derivation of the data rate are similar to those given in Appendix B. According to the definition of the average data rate given in \eqref{eq:data rate def.}, we have
\begin{align}
\nonumber
&\bar{C}=\mathbb{E}_{\Insp,\Gs,\Xs,\Rs}\left[\log_2\left(1+\SINR\right)\right]\\\nonumber
&=\hspace{-1 mm}\int_{\rmin}^{\rmax}\hspace{-1 mm}\mathbb{E}_{\Insp,\Gs,\Xs}\left[\log_2\left(1+\frac{\ps \Gs\Xs \rs^{-\alpha}}{\sigma^2+\Insp}\right)\right]f_{\Rs}(\rs)\,d\rs\\\nonumber
&\stackrel{}=\hspace{-1 mm}\int_{\rmin}^{\rmax}\hspace{-2 mm}\int_0^{\infty}\hspace{-1 mm}\mathbb{E}_{\Insp,\Gs,\Xs}\left[\Prob\left(\log_2\hspace{-1 mm}\left(\hspace{-1 mm}1+\frac{\ps \Gs\Xs \rs^{-\alpha}}{\sigma^2+\Insp}\hspace{-1 mm}\right)\hspace{-1 mm}>\hspace{-1 mm} t\right)\right]\hspace{-1 mm}\\
&\hspace{3 cm}\times f_{\Rs}(\rs)\,dt d\rs,
\end{align}
where the last step follows form the fact that for a positive random variable $X$, $\mathbb{E}\left[X\right]=\int_{t>0}\Prob\left(X>t\right)dt$. Thus, we have
\begin{align}
\nonumber
&\bar{C}=\hspace{-1 mm}\int_{\rmin}^{\rmax}\hspace{-2 mm}\int_0^{\infty}\mathbb{E}_{\Insp,\Gs,\Xs}\left[\Prob\left(\Gs\Xs>\frac{\rs^\alpha\left(\sigma^2+\Insp\right)}{\ps }\hspace{-1 mm}\left(2^t-1\right)\right)\right]\hspace{-1 mm}\\\nonumber
&\hspace{3 cm}\times f_{\Rs}(\rs)\,dt d\rs\\\nonumber
&\stackrel{(a)}=\hspace{-1 mm}\int_{\rmin}^{\rmax}\hspace{-2 mm}\int_0^{\infty}\hspace{-2 mm}\mathbb{E}_{\Insp}\hspace{-1 mm}\Bigg[\int_{0}^{\infty} \hspace{-1 mm}f_{\Xs}(\xs)\hspace{-1 mm}\\\nonumber
&\times\left(\hspace{-1 mm}1-F_{\Gs}\hspace{-1 mm}\left(\frac{\rs^\alpha\left(\Insp+\sigma^2\right)\left(2^t-1\right)}{\ps \xs }\right)\right)\hspace{-1 mm}d\xs\Bigg] f_{\Rs}(\rs)\, dt d\rs\\\nonumber
&\stackrel{(b)}=\int_{\rmin}^{\rmax}\hspace{-2 mm}\int_0^{\infty}\hspace{-2 mm}\mathbb{E}_{\Insp}\Bigg[\int_{0}^{\infty}\hspace{-1 mm} f_{\Xs}(\xs)\\\nonumber
&\times\left(\frac{\Gamma\left(m_0,m_0\frac{\rs^\alpha\left(\Insp+\sigma^2\right)\left(2^t-1\right)}{\ps \xs }\right)}{\Gamma\left(m_0\right)}\right)d\xs\Bigg] f_{\Rs}(\rs)\,dt d\rs\\\nonumber
&\stackrel{(c)}=\hspace{-1 mm}\int_{\rmin}^{\rmax}\hspace{-2 mm}\int_0^{\infty}\hspace{-2 mm}\int_{0}^{\infty} f_{\Xs}(\xs)f_{\Rs}(\rs) e^{-\frac{m_0\left(2^t-1\right)\rs^\alpha\sigma^2}{\ps \xs }}\\\nonumber
&\hspace{ 0 mm}\times \mathbb{E}_{\Insp}\Bigg[e^{-\frac{m_0\left(2^t-1\right)\rs^\alpha \Insp}{\ps \xs }}\\\nonumber
&\sum_{k=0}^{m_0-1}\hspace{-1 mm}\frac{\sum_{l=0}^{k}\hspace{-1 mm}\binom{k}{l}\hspace{-1 mm}\left(\frac{m_0\left(2^t-1\right)\rs^\alpha\sigma^2}{\ps \xs }\right)^l \hspace{-1 mm} \left(\frac{m_0\left(2^t-1\right)\rs^\alpha \Insp}{\ps \xs }\right)^{k-l}\hspace{-2 mm}}{k!}\Bigg]dt d\xs d\rs\\\nonumber
&= \hspace{-1 mm}\int_{\rmin}^{\rmax}\hspace{-2 mm}\int_{0}^{\infty}\hspace{-2 mm}\int_0^{\infty} f_{\Xs}(\xs)f_{\Rs}\left(\rs\right)\Bigg[e^{-s\sigma^2}\\
&\sum_{k=0}^{m_0-1}\hspace{-0 mm}\frac{\sum_{l=0}^{k}\binom{k}{l}\hspace{-1 mm}\left(s\sigma^2\right)^l \hspace{-1 mm}\left(-s\right)^{k-l}\hspace{-1 mm}\frac{\partial^{k-l}}{\partial s^{k-l}}\mathcal{L}_{I}(s)}{k!}\Bigg]_{s=\frac{m_0\left(2^t-1\right)\rs^\alpha}{\ps \xs }}\hspace{-2 mm}dt d\xs d\rs.
\end{align}
Similar to the proof of Theorem~\ref{theory:coverage_prob_NSP}, (a) follows from the product distribution of two independent random variables, (b) follows from the gamma distribution of serving channel gain $\Gs$, and (c) {is calculated by applying the incomplete gamma function for integer values of $m_0$ to (b).}

\subsection{Proof of Lemma~\ref{lem:Laplace NSP}}
\label{Appen:proof Lemma 5}
In this appendix, we derive the expression for Laplace function of interference assuming arbitrary distributions for fading and shadowing. Let us start with the definition of Laplace function for random variable $\Insp$ which is
\begin{align}
\label{eq:Laplace NSP proof}
\nonumber
&\mathcal{L}_{\Insp}(s) \triangleq\mathbb{E}_{\Insp}\left[e^{-s\Insp}\right]\\\nonumber
&=\mathbb{E}_{\mathcal{N}, R_n,\Xn, G_n}\left[\exp{\left(-s\sum_{n\in \xi \backslash \{\mathrm{s}\}}^{}\ppi G_n\Xn R_n^{-\alpha}\right)}\right]\\\nonumber
&=\mathbb{E}_{\mathcal{N}, R_n,\Xn, G_n}\left[\prod_{n\in \xi \backslash \{\mathrm{s}\}}^{}\exp{\left(-s\ppi G_n \Xn R_n^{-\alpha}\right)}\right]\\\nonumber
&\stackrel{(a)}= \mathbb{E}_{\mathcal{N}, R_n}\left[\prod_{n\in \xi \backslash \{\mathrm{s}\}}^{}\mathbb{E}_{\Xn, G_n}\left[\exp{\left(-s\ppi G_n \Xn R_n^{-\alpha}\right)}\right]\right]\\\nonumber
&\stackrel{(b)}= \mathbb{E}_{\mathcal{N}}\Bigg[\prod_{n\in \xi \backslash \{\mathrm{s}\}}^{}\int_{\rs}^{\rmax}\mathbb{E}_{\Xn, G_n}\left[\exp{\left(-s\ppi G_n \Xn r_n^{-\alpha}\right)}\right]\\
&\hspace{3 cm}\times f_{R_n|\Rs}(r_n|\rs)dr_n\Bigg],\\\nonumber
\end{align}
where  (a) follows from the i.i.d.\ distribution of $G_n$ and $\Xn$ as well as their independence from $\mathcal{N}$ and $R_n$ and (b) is obtained by taking the expectation over the random variable $R_n$ conditioned on $\Rs$. Then
\begin{align}
\nonumber
&\mathcal{L}_{\Insp}(s)\stackrel{(c)}= \sum_{n=0}^{\infty}P_{n}\left(\mathcal{A}\left(\rmax\right)-\mathcal{A}\left(\rs\right)\right)\\\nonumber
&\hspace{1 mm}\times\Bigg(\int_{\rs}^{\rmax}\mathbb{E}_{\Xn, G_n}\left[\exp{\left(-s\ppi G_n \Xn r_n^{-\alpha}\right)}\right] f_{R_n|\Rs}(r_n|\rs)dr_n\Bigg)^n\\\nonumber
&\stackrel{(d)}= \sum_{n=0}^{\infty}P_{n}\left(\mathcal{A}\left(\rmax\right)-\mathcal{A}\left(\rs\right)\right)\\
&\hspace{0 cm}\times\Bigg(\int_{\rs}^{\rmax}\int_0^{\infty}\hspace{-2 mm}\mathcal{L}_{G_n}(s\ppi x_n r_n^{-\alpha})f_{\Xn}(x_n)  f_{R_n|\Rs}(r_n|\rs)d\xs dr_n\Bigg)^n,
\end{align}
where $\mathcal{A}\left(\rmax\right)-\mathcal{A}\left(\rs\right)$ indicates the region above the user where satellites which are more distanced from the user than the serving satellite exist, (c) is obtained by taking the average over the Poisson random variable $\mathcal{N}$, and applying the law of total expectation on independent random variables $G_n$ and $\Xn$ results in (d).


\IEEEtriggeratref{24}
\bibliography{ref} 
\bibliographystyle{IEEEtran}



\end{document}